\begin{document}
\title{Shallow Circuits with High-Powered Inputs}

\author{Pascal Koiran\\
LIP\thanks{UMR 5668 ENS Lyon, CNRS, UCBL, INRIA.}, \'Ecole Normale Sup\'erieure de Lyon, Universit\'e de Lyon\\
Department of Computer Science, University of Toronto\thanks{A part 
of this work was done during a visit to the Fields Institute.}\\
{\tt pascal.koiran@gmail.com} 
}

\maketitle

\newtheorem{conjecture}{Conjecture}
\newtheorem{theorem}{Theorem}
\newtheorem{lemma}{Lemma}
\newtheorem{proposition}{Proposition}
\newtheorem{corollary}{Corollary}
\newtheorem{definition}{Definition}
\newtheorem{problem}{Problem}
\newtheorem{remark}{Remark}
\newtheorem{example}{Example}
\newtheorem{hypothesis}{Hypothesis}

\makeatletter
\def\@yproof[#1]{\@proof{ #1}}
\def\@proof#1{\begin{trivlist}\item[]{\em Proof#1.}}
\newenvironment{proof}{\@ifnextchar[{\@yproof}{\@proof{} 
}}{~$\Box$\end{trivlist}}
\makeatother

\newcommand\cc{\ensuremath{\mathbb{C}}}

\newcommand\vpspace{\ensuremath{\mathsf{VPSPACE}}}
\newcommand\pspace{\ensuremath{\mathsf{PSPACE}}}
\newcommand\vpzero{\ensuremath{\mathsf{VP}^0}}
\newcommand\rr{\ensuremath{\mathbb{R}}}
\newcommand\vp{\ensuremath{\mathsf{VP}}}
\newcommand\vpnb{\ensuremath{\mathsf{VP}_{\mathsf{nb}}}}
\newcommand\vpnbzero{\ensuremath{\mathsf{VP}^0_{\mathsf{nb}}}}
\newcommand\vnp{\ensuremath{\mathsf{VNP}}}
\newcommand\vnpzero{\ensuremath{\mathsf{VNP}^0}}
\newcommand\vnpnb{\ensuremath{\mathsf{VNP}_{\mathsf{nb}}}}
\newcommand\vnpnbzero{\ensuremath{\mathsf{VNP}^0_{\mathsf{nb}}}}
\newcommand\vpip{\ensuremath{\mathsf{V\Pi P}}}
\newcommand\vpipzero{\ensuremath{\mathsf{V\Pi P}^0}}
\newcommand\poly{\ensuremath{\mathsf{poly}}}
\newcommand\zz{\ensuremath{\mathbb{Z}}}
\newcommand\nn{\ensuremath{\mathbb{N}}}
\newcommand\fp{\ensuremath{\mathsf{FP}}}
\newcommand\p{\ensuremath{\mathsf{P}}}
\newcommand\pnu{\ensuremath{\mathbb{P}}}
\newcommand\np{\ensuremath{\mathsf{NP}}}
\newcommand\npnu{\ensuremath{\mathbb{NP}}}
\newcommand\per{\ensuremath{\mathrm{PER}}}
\newcommand\sharpp{\ensuremath{\mathsf{\sharp P}}}
\newcommand\pp{\ensuremath{\mathsf{PP}}}
\newcommand\gapp{\ensuremath{\mathsf{GapP}}}
\newcommand\gapppoly{\ensuremath{\mathsf{GapP/poly}}}
\newcommand\chpoly{\ensuremath{\mathsf{CH/poly}}}
\newcommand\ppoly{\ensuremath{\mathsf{P/poly}}}
\newcommand\ch{\ensuremath{\mathsf{CH}}}

\newcommand\bit{\ensuremath{\mathrm{Bit}}}

\newcommand\hit{\ensuremath{\cal H}}

\begin{abstract}
A polynomial identity testing algorithm must determine whether an input polynomial (given for instance by an arithmetic circuit) is identically equal to 0.
In this paper, we show that a deterministic black-box identity testing algorithm for (high-degree) univariate polynomials would imply a lower bound 
on the arithmetic complexity of the permanent. The lower bounds that are known to follow from derandomization of (low-degree) multivariate identity testing are weaker.

To obtain our lower bound it would be sufficient to derandomize identity testing for polynomials of a very specific norm: sums of products of sparse polynomials with sparse coefficients.
This observation leads to new versions of the Shub-Smale $\tau$-conjecture on
integer roots of univariate polynomials. In particular, we show that a lower bound for the permanent would follow if one could give a polynomial upper bound on
the number of real roots of sums of products of sparse polynomials
(Descartes' rule of signs gives such a bound for sparse polynomials and products thereof).

In this third version of our paper we show that the same lower bound
would follow even if one could only prove a slightly superpolynomial
 upper bound on the number of real roots.
This is a consequence of a new result on reduction to depth 4 for arithmetic circuits which we establish in a companion paper.
We also show that an even weaker bound on the number of real roots would
suffice to obtain a lower bound on the size of depth 4 circuits computing
the permanent.
\end{abstract}

\newpage

\section{Introduction}

A polynomial identity testing algorithm must determine whether an input polynomial (given for instance by an arithmetic circuit) is identically equal to 0.
If randomization is allowed, this problem can be solved efficiently thanks to
the well-known Schwarz-Zippel lemma.
Following Kabanets and Impagliazzo~\cite{KI04}, it has become increasingly clear in recent years that efficient {deterministic} algorithms for polynomial identity testing would imply strong lower bounds (the connection between arithmetic circuit lower bounds and derandomization of polynomial identity testing was foreshadowed in a 30 years old paper by Heintz and Schnorr~\cite{HS82}). This approach to lower bounds was advocated in particular by Agrawal~\cite{Agra05}.

In this paper we show that an efficient black-box deterministic identity 
testing algorithm for univariate polynomials of a very specific form
(namely, sums of products of sparse polynomials with sparse coefficients)
would imply that the permanent does not belong to \vpzero.
This is the class of polynomial families computable by constant-free arithmetic circuits of polynomial size and polynomially bounded formal degree.
It plays roughly the same role for constant-free circuits as the class $\vp$
in Valiant's algebraic version of the P versus NP problem
(in Valiant's original setting, arithmetic circuits can use arbitrary constants from the underlying field~\cite{Burg,Valiant79}).

Compared to~\cite{Agra05,KI04}, 
one originality of the present paper 
is to show that lower bound for multivariate polynomials 
such as the permanent would follow from {\em univariate} 
identity testing algorithms.
Most of the recent work on identity testing (surveyed in~\cite{AS09,Sax09}) has been focused on low-degree
multivariate polynomials.\footnote{Two exceptions are \cite{BHLV09,Koi09}.}
Nevertheless, we believe that the univariate approach is worth exploring 
for at least two reasons.

First, it would lead to stronger lower bounds. Indeed, we show that
black-box derandomization of identity testing implies a lower bound for the
permanent, whereas~\cite[Section~6.2]{Agra05} would only yield lower bounds for polynomials with coefficients computable in \pspace\ (this complexity class was independently defined in~\cite{KoiPe09}, where it is called \vpspace; further results on this class 
and other space-bounded classes in Valiant's model can be found in~\cite{KoiPe07b,Poizat08,MahaRao09}).
The lower bound obtained from~\cite{KI04} would be even weaker, but
could be
obtained from a non-black-box identity testing algorithm.

A second, possibly even more important advantage of the univariate approach
is that it leads to new (and hopefully more tractable) versions
of  Shub and Smale's $\tau$-conjecture.
According to the $\tau$-conjecture, the number of integer roots of a univariate polynomial $f \in \zz[X]$  should be bounded by a polynomial function of its arithmetic circuit size (the inputs to the circuit are the constant 1, or the variable $X$).
It was  shown  by B\"urgisser~\cite{Burg09} that the $\tau$-conjecture implies 
a lower bound for the permanent.
Our main ``hardness from derandomization'' result can be viewed as an improvement of B\"urgisser's result.
Indeed, it follows immediately from our result that to obtain a lower bound for the permanent,
one just has to bound the number of integer roots for sums of products of
sparse polynomials with sparse coefficients 
(rather than for arbitrary arithmetic circuits).
Our strongest version of the $\tau$-conjecture raises the intriguing possibility that tools from real analysis might be brought to bear on this  problem (a bound on the number
of real roots of a polynomial is a fortiori a bound on its number of integer roots).
It is known that this approach cannot work for the original $\tau$-conjecture because the number of real roots of a univariate polynomial can grow exponentially as a function of its arithmetic circuit size: Chebyshev polynomials provide such an example~\cite{SmaleProblems}. A similar example was provided earlier
by Borodin and Cook~\cite{BC76} (but they did not provide an analysis of 
the size of constants used by the corresponding arithmetic circuit).
We conjecture that this behavior is not possible for sums of products of sparse polynomials.

\subsection{Main Ideas}

A hitting set $\hit$ for a set $\cal F$ of polynomials
is a (finite) set of points such that there exists for any non-identically 
zero polynomial $f \in \cal F$ at least one point $a \in \hit$ such that 
$f(a) \neq 0$. Hitting sets are sometimes called ``correct test sequences''~\cite{HS82}.
It is well-known that deterministic constructions of hitting sets 
and black-box deterministic identity testing are two equivalent problems: any hitting set for $\hit$
yields an obvious black-box identity testing algorithm 
(declare that $f \equiv 0$ iff $f$ evaluates to 0 on all the points of $\hit$);
conversely, assuming that $\cal F$ contains the identically zero  polynomial, the set of points queried by a black box algorithm on the input $f \equiv 0$
must be a hitting set for $\cal F$. 

The connection between black-box identity testing and lower bounds is especially apparent for univariate polynomials~\cite{HS82}.
Namely, let $\hit$ be a hitting set for $\cal F$. The polynomial 
\begin{equation} \label{hardpoly}
P=\prod_{a \in \hit} (X-a)
\end{equation}
 cannot belong to $\cal F$ since it is nonzero
and vanishes on $\hit$. The same remark applies to all nonzero multiples of $P$. If $\cal F$ is viewed as some
kind of ``complexity class'', we have therefore obtained a lower bound against $\cal F$ by exhibiting a polynomial $P$ which does not belong to $\cal F$.

In the low-degree multivariate setting the polynomial which plays
the same role is not given by such a simple formula as~(\ref{hardpoly}).
Its coefficients can be obtained by solving an exponential size system of linear equations. This can be done in $\pspace$, explaining why the lower bound in~\cite{Agra05} would be for polynomials with coefficients 
computable in $\pspace$.
By contrast one can show that the coefficients in exponential-size products
such as~(\ref{hardpoly}) are in the counting hierarchy, a subclass 
of $\pspace$. This is the reason 
why we can obtain a lower bound for a polynomial in $\vnp$ (namely, the permanent) rather than in $\vpspace$ as in~\cite[Section~6.2]{Agra05}.

It remains to explain why we only have to derandomize 
identity testing for sums of products of sparse polynomials in order to obtain a lower bound.
This class of polynomials comes into the picture thanks to the recent depth reduction theorem of Agrawal and Vinay~\cite{AgraVinay08}: any multilinear polynomial in $n$ variables which has an arithmetic circuit of size $2^{o(n)}$ 
also has a depth-4 arithmetic circuit of size $2^{o(n)}$.
Sums of products of sparse polynomials are very far from being multilinear
(they are univariate polynomials of possibly very high degree).
They are nonetheless connected to depth-4 circuits by a simple 
transformation: if we replace the input variables in a depth-4 circuit
by powers of a single variable $X$, we obtain a SPS polynomial $f(X)$.

At this point, we should stress that we do {\em not} claim that
univariate arithmetic circuits can be efficiently converted into SPS polynomials (this would be a kind of high-degree analogue of Agrawal and Vinay's 
depth reduction theorem). On the contrary, we conjecture that 
such a transformation is in general impossible, and that Chebyshev polynomials
provide a counterexample (because, as pointed out earlier, 
they have too many real roots).
Nevertheless, to obtain our results we represent efficiently (in Theorem~\ref{rep_theorem}) certain exponential size products by sums of products of sparse polynomials. {\em This is possible only under the assumption that the permanent is easy.} This assumption (and the resulting representation) is of course very likely to be false, but there is no harm in making it since the ultimate goal is
a proof by contradiction that the permanent is hard.

\subsection{Organization of the Paper} \label{org}

In the next section we present our model of computation for the permanent (constant-free arithmetic circuits) as well as the corresponding complexity classes.
Then we  recall some definitions and results about the counting hierarchy 
(as explained above, this class plays a crucial role in the derivation
of a lower bound for the permanent).
Finally, we present the result by Agrawal and Vinay on reduction to depth four
for arithmetic circuits, as well as a new result along the same lines~\cite{Koi4d}.

In Section~\ref{sps_section} we define precisely the notion of sum of products of sparse polynomials with sparse coefficients, and explain the connection
to depth-4 circuits.

In Section~\ref{gen_section} we present the notion of algebraic number generator. 
This is basically just a sequence of efficiently computable polynomials in $\zz[X]$.
We wish to use them to construct hitting sets, by taking the sets of all roots
of the polynomials in an initial segment of this sequence.
In Section~\ref{lb_section} we prove our main result: if a polynomial-size 
initial segment provides a hitting set against sums of products of sparse polynomials with sparse coefficients, then the permanent is not in \vpzero.
In fact, using our new result on reduction to depth four~\cite{Koi4d}
we can show that the same lower bound would follow even if the hitting
sets are of slightly superpolynomial size.

In Section~\ref{real_section} we present three new versions of the $\tau$-conjecture, including a ``real $\tau$-conjecture''. A proof of any of these conjectures would yield a lower bound for the permanent. 
We show that a  fairly weak version of the real $\tau$-conjecture would
suffice to obtain a lower bound on the size of depth 4 circuits computing
the permanent.
We conclude the paper with a few remarks on 
some  tools that might be useful to attack these conjectures.

\section{Preliminaries}

\subsection{Complexity of Arithmetic Computations} \label{arith}

We recall that an  arithmetic circuit contains addition, subtraction and multiplication gates. We usually assume that these gates have arity 2, except when dealing with constant-depth circuits as in e.g. Theorem~\ref{reduction}.
The input gates are labelled by variables or constants.
A circuit where the only constants are from the set $\{0, -1, 1\}$ is said to 
be constant-free (in such a circuit one can even assume that $-1$ is the only
constant, and that there are no subtraction gates).
A constant-free circuit represents a polynomial in $\zz[X_1,\ldots,X_n]$, where
$X_1,\ldots,X_n$ are the variables labelling the input gates.

In this paper we investigate the complexity of computing the permanent 
polynomial with constant-free arithmetic circuits.
This model of computation was systematically studied by Malod~\cite{Malod03}.
In particular, he defined a class $\vp^0$ of polynomial families
that are ``easy to compute'' by constant-free arithmetic circuits.
First we need to recall the notion of formal degree:
\begin{itemize}
\item[(i)] The formal degree of an input gate is equal to 1.
\item[(ii)]
The formal degree of an addition or subtraction gate is the maximum of
the formal degrees of its two incoming gates, and the  formal degree
of a multiplication gate is the sum of these two formal degrees.
\end{itemize}
Finally, the formal degree of a circuit is equal to the formal degree
of its output gate.
This is obviously an upper bound on the degree of the polynomial
computed by the circuit.
\begin{definition}
A sequence $(f_n)$ of polynomials 
belongs to $\vp^0$ if there exists a polynomial $p(n)$ 
and a sequence $(C_n)$ of
constant-free arithmetic circuits 
such that $C_n$ computes $f_n$ and is of size (number of gates) 
and formal degree at most $p(n)$.
\end{definition}
The size constraint implies in particular that $f_n$ depends on
polynomially many variables. The constraint on the formal degree 
forbids the computation of polynomials of high degree such as e.g. $X^{2^n}$;
it also forbids the computation of large constants such as $2^{2^n}$.

A central question in the constant-free setting is whether the permanent family belongs to \vpzero. A related question is whether $\tau(\per_n)$, the constant-free arithmetic circuit of the $n \times n$ permanent, is polynomially bounded in $n$.
Obviously, if $\per \in \vpzero$ then  $\tau(\per_n)$ is polynomially bounded in $n$, but (as pointed out in e.g.~\cite{Burg}) it is not clear whether the converse holds true. In this paper we focus on the first question (see section~\ref{remarks} for further comments).

Another important complexity class in the constant-free setting is the
class  $\vnp^0$  of easily definable families. 
It is obtained from $\vp^0$ in the natural way:
\begin{definition}
A sequence $(f_n(X_1,\ldots,X_{u(n)}))$ belongs to $\vnp^0$ if there
exists a sequence $(g_n(X_1,\ldots,X_{v(n)}))$ in $\vp^0$
such that:
$$f_n(X_1,\ldots,X_{u(n)}) = 
\sum_{\overline{\epsilon} \in \{0,1\}^{v(n)-u(n)}}
g_n(X_1,\ldots,X_{u(n)},\overline{\epsilon}).$$
\end{definition}
For instance, the permanent family is in \vnpzero.
If this family in fact belongs to $\vpzero$ then the same is true of every $\vnp^0$ family up to constant multiplicative factors.
Indeed, we have the following result (Theorem~4.3 of~\cite{Koi04}):
\begin{theorem} \label{complete}
Assume that the permanent family is in $\vp^0$. For every family $(f_n)$ in $\vnp^0$ there exists a polynomially bounded function $p(n)$ such that the family $(2^{p(n)}f_n)$ is in $\vp^0$.
\end{theorem}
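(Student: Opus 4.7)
The plan is to combine Valiant's $\vnp$-completeness of the permanent with a careful treatment of the constants available in the constant-free model. Take $(f_n) \in \vnpzero$ presented as $f_n(x) = \sum_{\overline{\epsilon} \in \{0,1\}^{v(n)-u(n)}} g_n(x, \overline{\epsilon})$ with $(g_n) \in \vpzero$ computed by constant-free circuits $C_n$ of polynomial size and polynomial formal degree. I would first apply the standard graph-theoretic Valiant reduction to $C_n$: this produces an $m(n) \times m(n)$ matrix $M_n(x, \overline{\epsilon})$ with $m(n) = \poly(n)$, whose entries lie in $\{x_i, \epsilon_j, 0, -1, 1\}$ (the reduction introduces only constants that already appear in the source circuit), and such that $\per_{m(n)}(M_n(x, \overline{\epsilon})) = g_n(x, \overline{\epsilon})$. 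Hence $f_n(x) = \sum_{\overline{\epsilon}} \per_{m(n)}(M_n(x, \overline{\epsilon}))$.

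The second step is to collapse this exponential sum into the permanent of a single polynomial-size matrix using standard ``switching gadgets'': for each existential variable $\epsilon_j$, attach a constant-size block to $M_n$ that encodes the identity $\sum_{\epsilon_j \in \{0,1\}} h(\epsilon_j) = h(0) + h(1)$ at the level of permanents. The resulting matrix $\tilde{M}_n(x)$ has polynomial size $m'(n) = \poly(n)$ and entries that are variables or small bounded integers, and its permanent equals $2^{p(n)} f_n(x)$ for some $p(n) = \poly(n)$. The factor $2^{p(n)}$ arises because each existential gadget contributes one multiplicative factor of $2$ that must be cleared to preserve both the $\{0, \pm 1\}$-constant structure and the bounded-entry condition; since $v(n)-u(n)$ is polynomial by definition of $\vnpzero$, so is $p(n)$. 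Finally, using the hypothesis $\per \in \vpzero$, I would take a constant-free circuit $P_{m'(n)}$ of polynomial size and polynomial formal degree computing $\per_{m'(n)}$, and substitute the entries of $\tilde{M}_n(x)$ at its inputs. Each variable entry contributes formal degree one, and each small integer constant can be produced by a subcircuit of logarithmic size and formal degree one (exploiting that $2^k = 2^{k-1} + 2^{k-1}$ has formal degree one). The substituted circuit therefore has polynomial size and polynomial formal degree and computes $2^{p(n)} f_n$, placing this family in $\vpzero$.

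The main obstacle I expect is in the middle step: designing the switching gadgets so that the exponential sum over $\overline{\epsilon}$ genuinely collapses to a polynomial-size permanent with bounded integer entries and only $\{0, \pm 1\}$-constants, while carefully tracking the factor of $2$ introduced per existential variable. This factor seems unavoidable in natural constructions because the permanent is multilinear in its rows, so any row-manipulation gadget encoding $\sum_{\epsilon_j \in \{0,1\}}$ inherently multiplies the output by $2$; any more economical gadget would require rational constants outside $\{0, \pm 1\}$ and thus fall outside the constant-free framework. The $2^{p(n)}$ multiplier in the theorem is precisely what absorbs these accumulated gadget factors.
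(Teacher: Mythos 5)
Your overall route --- adapt Valiant's completeness proof for the permanent to the constant-free setting, track the powers of $2$ needed to eliminate the constant $1/2$, then substitute the resulting matrix entries into a hypothesized $\vpzero$ circuit for $\per$ --- is the intended one: the paper does not prove this statement itself but quotes it as Theorem~4.3 of~\cite{Koi04}, and its accompanying remark (the factor $2^{p(n)}$ comes from the constant $1/2$ in the completeness proof, and can be avoided by using the Hamiltonian polynomial) confirms your diagnosis of where the power of $2$ originates. Your final substitution step, including the observation that any integer of polynomial bit size has a constant-free circuit of formal degree $1$ built from additions alone, is correct.

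There is, however, a genuine gap in your first step. The ``standard graph-theoretic Valiant reduction'' (series-parallel construction of a source-sink path graph, then closing paths into cycle covers) applies to formulas, or more generally to algebraic branching programs and weakly skew circuits, not to arbitrary circuits. A $\vpzero$ family $(g_n)$ is given by circuits with gate reuse; applying the reduction gate by gate to such a circuit and obtaining a polynomial-size graph would amount to converting arbitrary $\vp$ circuits into polynomial-size branching programs, which is a well-known open problem. The conclusion you assert (a polynomial-size matrix $M_n$ with $\per(M_n)=g_n$ and entries among variables and $0,\pm 1$) is ultimately true, but only as a consequence of the completeness theorem whose proof you are reconstructing; the known proof must first replace the circuit inside the exponential sum by a formula, i.e.\ it needs the nontrivial equivalence of the circuit-based and formula-based (also called expression-based) definitions of $\vnp$, which in the constant-free setting is obtained via Valiant's criterion (Lemma~\ref{criterion}: the coefficient function of a $\vpzero$ family is in $\gapppoly$), and only then applies the graph reduction. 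Relatedly, your second step is stated too lightly: a summation variable $\epsilon_j$ occurs in many entries of $M_n$, so collapsing $\sum_{\epsilon_j\in\{0,1\}}$ requires the consistency (``iff''/rosette) gadgets of Valiant's proof, one per occurrence or pair of occurrences, together with the cancellation argument showing that inconsistent cycle covers contribute $0$; a single constant-size block per $\epsilon_j$ does not achieve this, and your multilinearity argument for why a factor $2$ (equivalently, the constant $1/2$) is unavoidable is a heuristic rather than a proof. These two points are precisely where the cited proof does its real work, so as written your proposal presupposes rather than establishes the key reduction.
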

The occurence of the factor $2^{p(n)}$ in this theorem is due to the fact that the completenes proof of the permanent uses the constant 1/2.
As in~\cite{Koi04} one could avoid this factor by working with the Hamiltonian polynomial instead of the permanent.

The next lemma is Valiant's criterion. The present formulation is basically
that of~\cite[Th.~2.3]{Koi04} but this lemma essentially goes back to~\cite{Valiant79}(see also~\cite[Prop.~2.20]{Burg}).
\begin{lemma}[Valiant's criterion] \label{criterion}
Suppose that $n \mapsto p(n)$ is a
polynomially bounded function, 
and that $f:\nn \times \nn \rightarrow \zz$ is such 
that the map $1^n0j \mapsto f(j,n)$ is in the
complexity class $\gapp/\poly$. 
Then the family $(f_n)$ of multilinear polynomials defined by
\begin{equation} \label{vceq}
f_n (X_1,\ldots,X_{p(n)} )
= \sum_{j \in \{0,1\}^{p(n)}} f(j,n) X_1^{j_1}\cdots
X_{p(n)}^{j_{p(n)}}
\end{equation}
is in $\vnp^0$. Here $j_k$ denotes the bit of $j$ of weight $2^{k-1}$.
\end{lemma}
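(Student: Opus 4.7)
The plan is to realize the coefficient function $f(\cdot,n)$ as a $\vpzero$-computable summation, and then expand the monomials in a $\vnpzero$-compatible way.

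First, I would unpack the hypothesis: since $1^n0j \mapsto f(j,n)$ lies in $\gapp/\poly$, there exist two nondeterministic polynomial-time Turing machines $M_1, M_2$ and polynomial-size advice strings $(a_n)$ with
\[
f(j,n) = \#M_1(1^n0j, a_n) - \#M_2(1^n0j, a_n),
\]
where $\#M_i$ counts accepting computation paths. By the (uniform) Cook--Levin construction, for each $n$ and each $i \in \{1,2\}$ there is a polynomial-size Boolean circuit $B_{i,n}$ acting on $(j,w)$ whose accepting inputs are exactly the accepting computations of $M_i(1^n0j,a_n)$ with nondeterministic guess $w \in \{0,1\}^{q(n)}$, where $q$ is polynomially bounded and the advice $a_n$ is hard-wired into $B_{i,n}$.

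Second, I would convert each $B_{i,n}$ into a constant-free arithmetic circuit in $\vpzero$ without blowing up the formal degree. The naive substitution ($\land\mapsto ab$, $\lor\mapsto a+b-ab$, $\lnot\mapsto 1-a$) yields polynomial size but potentially exponential formal degree. The standard fix is a \emph{guess-and-verify} device: introduce one auxiliary Boolean variable $y_g$ for every internal gate $g$ of $B_{i,n}$ and form a local indicator polynomial of constant degree checking that $y_g$ agrees with its alleged inputs (e.g.\ $cab+(1-c)(1-ab)$ for an AND gate with output $c$ and inputs $a,b$); multiply these local indicators together with the indicator that the output wire equals $1$. For every $(j,w)$ exactly one assignment of $y$ is consistent, and it contributes $1$ iff the path accepts. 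This produces a constant-free arithmetic circuit of polynomial size and polynomial formal degree computing a polynomial $P_{i,n}(j,w,y)$ with
\[
\#M_i(1^n0j,a_n) = \sum_{(w,y)\in\{0,1\}^{r(n)}} P_{i,n}(j,w,y),
\]
for some polynomially bounded $r(n)$. Setting $P_n := P_{1,n} - P_{2,n}$ keeps us in $\vpzero$.

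Third, I would use the identity $X_k^{j_k}=1+j_k(X_k-1)$, valid for $j_k\in\{0,1\}$, to expand the monomials in the defining formula:
\[
f_n(X_1,\ldots,X_{p(n)}) = \sum_{(j,w,y)} P_n(j,w,y)\prod_{k=1}^{p(n)}\bigl(1+j_k(X_k-1)\bigr),
\]
the sum ranging over $(j,w,y)\in\{0,1\}^{p(n)+r(n)}$. The summand is a polynomial in $(X,j,w,y)$ computable by a constant-free arithmetic circuit of polynomial size and polynomial formal degree (the product contributes formal degree $2p(n)$ and is of polynomial size), so the family fits the definition of $\vnpzero$ verbatim.

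The main obstacle is the second step: a careless Boolean-to-arithmetic translation yields a polynomial of potentially exponential formal degree, which lands outside $\vpzero$ and thus outside $\vnpzero$ after summation. The guess-and-verify reformulation, which replaces \emph{evaluating} the Boolean circuit by \emph{checking} a guessed wire-assignment, is precisely what preserves polynomial formal degree and is essential for the conclusion to hold in the constant-free class $\vnpzero$ rather than just in $\vnp$.
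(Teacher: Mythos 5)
Your proof is correct, and it is essentially the standard argument behind Valiant's criterion: the paper itself gives no proof of this lemma, citing instead \cite{Koi04}, \cite{Valiant79} and \cite[Prop.~2.20]{Burg}, and the proofs found there proceed exactly as you do --- a parsimonious Cook--Levin-style verification of a guessed computation (your gate-variables $y_g$ with constant-degree local consistency checks) so that the arithmetization has polynomial formal degree, followed by the substitution $X_k^{j_k}=1+j_k(X_k-1)$ and summation over the Boolean cube, all with constants in $\{-1,0,1\}$ as required for $\vnpzero$. The one detail to tidy up is the step $P_n:=P_{1,n}-P_{2,n}$: the two machines generally use different numbers of auxiliary Boolean variables, so before subtracting you must sum both over the same cube, e.g.\ by multiplying each $P_{i,n}$ by factors $\prod_\ell(1-z_\ell)$ over the other's dummy variables (or by padding the machines); otherwise the difference picks up spurious powers of $2$. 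This is routine and does not affect the validity of the approach.
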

Note that we use a unary encoding for $n$ but a binary encoding for $j$.
We recall the definition of $\gapp/\poly$ (and a few other boolean complexity classes) in Section~\ref{counting}. In this paper we only need to apply Valiant's criterion to boolean-valued functions 
($f(j,n) \in \{0,1\}$ for all $j$ and $n$)
such that the map $1^n0j \mapsto f(j,n)$ is in \ppoly.

\subsection{The Counting Hierarchy} \label{counting}

A connection between the counting hierarchy and algebraic complexity theory
was discovered in~\cite{ABKB09}. This connection was further explored in~\cite{Burg09} and~\cite{KoiPe10}. For instance, it was shown in~\cite{Burg09} that the polynomials $\prod_{i=0}^{2^n}(X-i)$ have polynomial-size circuits if the the same is true for the permanent family.

We first recall the definition of the two counting classes $\sharpp$ and $\gapp$.\begin{definition}
The class \sharpp\ is the set of functions
    $f:\{0,1\}^*\rightarrow \nn$ such that there exist a language
    $A\in\p$ and a polynomial $p(n)$
    satisfying $$f(x)=\#\{y\in\{0,1\}^{p(|x|)}:(x,y)\in A\}.$$
  A function $f:\{0,1\}^*\rightarrow \zz$ is in \gapp\ if it is the difference of two \sharpp\ functions.
\end{definition}
The counting hierarchy introduced in~\cite{wagner1986} is a class of languages rather than functions. It is defined via the
majority operator $\mathbf{C}$ as follows.
\begin{definition}
If $K$ is a complexity class, the class $\mathbf{C}.K$ is the set of
    languages $A$ such that there exist a language $B\in K$ and a polynomial
    $p(n)$ satisfying
    $$x\in A\iff \#\{y\in\{0,1\}^{p(|x|)}:(x,y)\in B\}\geq 2^{p(|x|)-1}.$$
The $i$-th level $\mathsf{C_iP}$ of the counting hierarchy is defined
    recursively by $\mathsf{C_0P}=\p$ and
    $\mathsf{C_{i+1}P}=\mathbf{C}.\mathsf{C_iP}$. The counting hierarchy \ch\
    is the union of the levels $\mathsf{C_iP}$ for all $i \geq 0$.
\end{definition}
 The counting hierarchy contains all the polynomial
hierarchy $\mathsf{PH}$ and is contained in \pspace.

The arithmetic circuit classes defined in Section~\ref{arith} are nonuniform.
As a result, we will actually work with nonuniform versions of the counting classes defined above. We use the standard Karp-Lipton notation~\cite{KaLi82}:
\begin{definition}
  If $K$ is a complexity class, the class $K/\poly$ is the set of languages
  $A$ such that there exist a language $B\in K$, a polynomial $p(n)$ and a
  family $(a_n)_{n\geq 0}$ of words (the ''advice'')  satisfying
  \begin{itemize}
  \item for all $n\geq 0$, $|a_n|\leq p(n)$;
  \item for all word $x$, $x\in A\iff (x,a(|x|))\in B$.
  \end{itemize}
  Note that the advice only depends on the size of $x$.
\end{definition}

The next lemma \cite[Lemmas~2.6 and~2.13]{Burg09} 
provides a first link between 
arithmetic complexity and the counting hierarchy.
\begin{lemma}\label{lemma_ch}
  If the permanent family is in $\vp^0$ 
  then $\chpoly=\p/\poly$.
\end{lemma}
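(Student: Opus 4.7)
The reverse inclusion $\ppoly \subseteq \chpoly$ is immediate from $\p \subseteq \ch$, so the plan is to prove $\chpoly \subseteq \ppoly$ under the hypothesis $\per \in \vpzero$. I would do this by induction on the level of the counting hierarchy, establishing $\mathsf{C_iP}/\poly \subseteq \ppoly$ for every $i$. The base case $i = 0$ is trivial. For the inductive step, take $A \in \mathsf{C_{i+1}P}/\poly = \mathbf{C}.\mathsf{C_iP}/\poly$, witnessed by a predicate $B \in \mathsf{C_iP}/\poly$ and a polynomial $p$ with
$$x \in A \iff N(x) := \#\{y \in \{0,1\}^{p(|x|)} : (x,y) \in B\} \geq 2^{p(|x|)-1}.$$
By the induction hypothesis $B \in \ppoly$, so $N(x)$ is the number of accepting witnesses of a nonuniform polynomial-time predicate.

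Valiant's $\sharpp$-completeness reduction, applied to the polynomial-time predicate underlying $B$, yields a $0/1$ matrix $M_x$ of polynomial dimension whose entries are computable from $x$ in $\ppoly$ and which satisfies $N(x) = \per(M_x)$. The hypothesis $\per \in \vpzero$ then supplies a family $(C_n)$ of constant-free arithmetic circuits of polynomial size and polynomial formal degree that compute the generic $n \times n$ permanent. Substituting the entries of $M_x$ into $C_{n(x)}$ (where $n(x)$ is the dimension of $M_x$) yields an arithmetic computation whose output is $N(x)$, after which it suffices to compare $N(x)$ with the threshold $2^{p(|x|)-1}$.

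The crucial technical point is that every intermediate value produced by $C_{n(x)}$ on $0/1$ inputs has polynomial bit-size. I would prove this by induction on the gates: letting $P_g$ denote the polynomial computed at gate $g$ and $d_g$ its formal degree, one shows that the $\ell_1$-norm of the coefficient vector of $P_g$ is at most $2^{d_g + s}$, where $s$ is the circuit size. Indeed, multiplication respects the $\ell_1$-norm multiplicatively while adding formal degrees, and addition can only double the norm; so starting from the bound $1$ at input gates this invariant propagates through the whole circuit. Since the value of $P_g$ on $0/1$ inputs is bounded by its $\ell_1$-norm, each intermediate integer has polynomial bit-size, the evaluation and comparison run in polynomial time, and bundling the $\ppoly$ advice for $B$, the $\ppoly$ advice for the entries of $M_x$, and the description of $C_{n(x)}$ (all functions of $|x|$ only) produces a $\ppoly$ algorithm for $A$, closing the induction. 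The main obstacle in this plan is the bit-size bookkeeping: the argument genuinely needs the polynomial \emph{formal-degree} restriction of $\vpzero$ and would break for general $\vp$ circuits, where repeated squaring allows intermediate coefficients to grow doubly exponentially.
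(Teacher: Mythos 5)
Your overall strategy is the right one, and it is essentially the argument behind the result the paper cites (the paper gives no proof of this lemma; it invokes B\"urgisser's Lemmas~2.6 and~2.13, whose proof is exactly this pattern: Valiant's $\sharpp$-completeness of the permanent turns the hypothesis $\per\in\vpzero$ into $\sharpp\subseteq\fp/\poly$, and an induction on the levels of \ch, with the advice folded in, collapses $\chpoly$ to $\ppoly$). The swap of the advice past the $\mathbf{C}$ operator is legitimate since the advice depends only on $|x|$, and the standard caveat about Valiant's reduction (the permanent of the constructed matrix may equal $N(x)$ only up to an explicitly known factor, or the matrix may have small integer rather than $0/1$ entries) is harmless for the evaluation argument.

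The one step that is genuinely wrong as stated is the bookkeeping lemma. The invariant ``$\ell_1$-norm of the coefficients of $P_g$ is at most $2^{d_g+s}$'' does not propagate: at a multiplication gate the norms multiply and the formal degrees add, so from $2^{d_u+s}$ and $2^{d_v+s}$ you only get $2^{d_g+2s}$, not $2^{d_g+s}$ (and the addition step has the same problem with the $+1$ in the exponent). It is also false as a statement, not just unproven: squaring $x_1+x_2+x_3+x_4$ repeatedly $k$ times gives a circuit of size about $k+3$ and formal degree $d=2^k$ whose output has $\ell_1$-norm $4^{2^k}=2^{2d}$, which exceeds $2^{d+s}$ once $2^k>k+3$. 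The repair is standard and costs nothing: prove instead $N_g\leq 2^{d_g\cdot s_g}$, where $s_g$ is the number of gates in the subcircuit rooted at $g$ (multiplication: $d_us_u+d_vs_v\leq(d_u+d_v)\max(s_u,s_v)\leq d_gs_g$; addition: $\max(d_u s_u,d_v s_v)+1\leq d_g(\max(s_u,s_v)+1)\leq d_gs_g$). This gives $N_g\leq 2^{ds}$, still polynomial bit-size since both $d$ and $s$ are polynomially bounded in $\vpzero$, and the rest of your argument (polynomial-time evaluation on the matrix entries, threshold comparison, bundling of advice) then goes through unchanged. Your closing remark is on target: it is precisely the formal-degree bound of $\vpzero$ that makes this evaluation polynomial-time.
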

In particular, Lemma~\ref{lemma_ch} was used to 
show that large sums and products are computable in the counting hierarchy
\cite[Theorem~3.10]{Burg09}. 

In the remainder of this section we summarize some relevant results
from~\cite{KoiPe10}.
\begin{definition} \label{coeff_def}
Let $(f_n)$ be a family of polynomials in $\zz[X]$  such that 
the degree of $f_n$ and the bitsize of its coefficients are smaller than $2^{p(n)}$ for some polynomial $p$. 

The coefficient sequence 
of $(f_n)$ is the (double) sequence of integers $a(n,\alpha)$
defined by the relation
$$f_n(x)=\sum_{\alpha=0}^{2^{p(n)}-1} a(n,\alpha)x^{\alpha}.$$

The coefficient sequence is said to be definable in $\chpoly$ if the language
 ${\rm Bit}(a)=\{(1^n,\alpha,j,b);\ \mbox{ the $j$-th bit of $a(n,\alpha)$ is equal to }b\}$ is in \chpoly.
\end{definition}
Note that in the above definition of  ${\rm Bit}(a)$, the input $n$ is given 
in unary but $\alpha$ and $j$ are in binary (this is the same convention as in~\cite{KoiPe10}; by contrast, in~\cite{Burg09} all inputs are in binary).
\begin{definition}
Let $(f_n)$ be a family of polynomials as in Definition~\ref{coeff_def}.
We say that this family can be evaluated in \chpoly\ if the language
$$\{(1^n,i,j,b);\ 0 \leq i < 2^{p(n)} \mbox{and the $j$-th bit of $f_n(i)$ is equal to }b\}$$ is in \chpoly.
\end{definition}

The following result establishes a connection between these two definitions.
It is stated (and proved) in the proof of the main theorem (Theorem~3.5) of~\cite{KoiPe10}.
\begin{theorem} \label{coeffseq}
Let $(f_n)$ be a family of polynomials as in Definition~\ref{coeff_def}.
If $(f_n)$ can be evaluated in \chpoly\ at integer points,
the coefficient sequence of $(f_n)$ is definable in \chpoly.
\end{theorem}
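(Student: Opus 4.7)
The plan is to recover each coefficient $a(n,\alpha)$ by evaluating $f_n$ at a single, carefully chosen integer and reading off its digits in that base. The catch is twofold: the coefficients can be negative, and all the intermediate integers have exponentially many bits, so each arithmetic step must be shown to be doable in $\chpoly$.

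Let $M = 2^{p(n)}$, so that every $a(n,\alpha)$ lies in $(-M,M)$ and $\deg f_n \le M$. Pick a power of two $B$ with $B > 2M$; by inflating $p$ by an additive constant (which is harmless) we may assume the hypothesis gives us the bits of $f_n(B)$. Now consider the shifted polynomial
\[
\tilde f_n(x) \;=\; f_n(x) \,+\, M\cdot\frac{x^{d+1}-1}{x-1} \;=\; \sum_{\alpha=0}^{d}(a(n,\alpha)+M)\,x^{\alpha},
\]
whose coefficients $\tilde a(n,\alpha)=a(n,\alpha)+M$ all lie in $[0,2M)\subset[0,B)$. Because these shifted coefficients fit into one base-$B$ digit, no carries appear when one evaluates at $x=B$: the $\alpha$-th base-$B$ digit of $\tilde f_n(B)$ is exactly $\tilde a(n,\alpha)$. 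Consequently $a(n,\alpha)$ is determined by the block of $\log_2 B = O(p(n))$ bits of $\tilde f_n(B)$ sitting at bit-positions $\alpha\log_2 B,\ldots,(\alpha+1)\log_2 B-1$, minus the fixed constant $M$.

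It remains to argue that each step of the above recipe is computable bit-by-bit in $\chpoly$. First, the individual bits of $f_n(B)$ are provided by the hypothesis. Second, the auxiliary integer $M(B^{d+1}-1)/(B-1) = M\sum_{\alpha=0}^{d}B^{\alpha}$ has an explicit sparse binary representation because $B$ and $M$ are both powers of two: its $j$-th bit is $1$ iff $j$ has a prescribed form, which is trivially in $\mathsf{P/poly}$. Third, $\tilde f_n(B)$ is the sum of those two exponentially long integers, and the $j$-th bit of such a sum is computable in uniform $\mathsf{TC}^0$ (iterated addition), hence certainly in $\chpoly$. Fourth, selecting the block of bits corresponding to digit $\alpha$ and subtracting $M$ is a $p(n)$-bit integer subtraction whose result bits are in $\mathsf{TC}^0$. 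Chaining these together, the language $\bit(a)$ is decided by a $\chpoly$ computation.

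The main obstacle, and the only non-elementary ingredient, is step three: performing an integer addition of two numbers of exponential bit length and extracting a single bit of the result in $\chpoly$. This relies on the classical fact that iterated arithmetic on such numbers is in the counting hierarchy (the same black-box used implicitly elsewhere in \cite{Burg09,KoiPe10}). Once one is comfortable with this, the shift-and-base-$B$ trick sidesteps the sign problem cleanly and the rest of the argument is routine bookkeeping with bit positions.
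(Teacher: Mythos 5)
There is a genuine gap, and it is right at the first line: you set $M=2^{p(n)}$ and claim every $a(n,\alpha)$ lies in $(-M,M)$, but Definition~\ref{coeff_def} only bounds the \emph{bitsize} of the coefficients by $2^{p(n)}$, so the coefficients can be as large as roughly $2^{2^{p(n)}}$ in absolute value. (This generality is not optional: in the intended application, Theorem~\ref{bigproduct}, one has $\|g_n\|_1\leq 2^{2^{(c+2)n}}$, i.e.\ coefficients with exponentially many bits.) Your carry-free base-$B$ readout therefore needs $B>2\cdot 2^{2^{p(n)}}$, an integer with about $2^{p(n)}$ bits, whereas the hypothesis of the theorem only supplies the bits of $f_n(i)$ for integer points $0\leq i<2^{p(n)}$, i.e.\ points with only $p(n)$ bits. ``Inflating $p$ by an additive constant'' cannot bridge this: you would need to replace $p(n)$ by roughly $2^{p(n)}$, which destroys the polynomial bound and changes the hypothesis. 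So the single-evaluation Kronecker-substitution trick cannot be run from the stated assumption; the remaining steps (sparse shift, block extraction, exponential-length addition in \ch) are fine in spirit but rest on this false premise.

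The paper does not reprove the statement; it invokes the proof of Theorem~3.5 of~\cite{KoiPe10}, whose route is forced by exactly the obstruction above: since only the values at the polynomially-many-bit points $0,1,\dots,\deg f_n<2^{p(n)}$ are available, one recovers $a(n,\alpha)$ by Lagrange interpolation through all $2^{p(n)}$ of these points, writing each coefficient as an exponential-size signed sum of products of the values $f_n(i)$ with explicit integer numerators and a common denominator, and then uses the counting-hierarchy machinery (closure of \chpoly-definable integer sequences under exponential iterated sums and products, plus bitwise integer division, as in~\cite{Burg09} and the Allender et al.\ results) to extract single bits of the result. If you want to salvage your argument you must either strengthen the hypothesis to evaluation at integers with exponentially many bits (which is not what Theorem~\ref{coeffseq} assumes, and not what Theorem~\ref{bigproduct} can deliver) or switch to the interpolation argument.
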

In~\cite{KoiPe10} we actually prove a multivariate version of this result,
but the univariate case will be sufficient for our purposes.
%

\subsection{Sums of Products of Dense Polynomials} \label{SPD}

Agrawal and Vinay have shown that
polynomials of degree $d=O(m)$ in $m$ variables
 which admit nontrivial arithmetic circuits also admit nontrivial
 arithmetic circuits 
of depth four~\cite{AgraVinay08}. Here, ``nontrivial'' means of size 
$2^{o(d+d\log{m \over d})}$. The resulting depth 4 circuits are $\sum \prod \sum \prod$ arithmetic formulas: the output gate (at depth 4) and the gates at depth 2 are addition gates,  and the other gates are multiplication gates.
This theorem shows that for problems such
as arithmetic circuit lower bounds or black-box derandomization 
of identity testing, the case of depth four circuits is in a certain sense
the general case.

We will need to apply reduction to depth four to 
multilinear polynomials only.
In this case their result (Corollary~2.5 in~\cite{AgraVinay08}) reads as follows:
\begin{theorem}[Reduction to depth four] \label{reduction}
A multilinear polynomial in $m$ variables which has an arithmetic circuit of size $2^{o(m)}$ 
also has a depth 4 arithmetic circuit of size $2^{o(m)}$.
\end{theorem}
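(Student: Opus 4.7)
The plan is to follow the Agrawal--Vinay strategy, which itself builds on the Valiant--Skyum--Berkowitz--Rackoff (VSBR) depth reduction. First, I would apply VSBR to the given circuit $C$ of size $s = 2^{o(m)}$ computing our multilinear $f$: any circuit of size $s$ computing a polynomial of degree $d$ is equivalent to a homogeneous arithmetic circuit of depth $O(\log d)$ and size $\mathrm{poly}(s,d)$. Since $f$ is multilinear, $d \leq m$, so the resulting circuit has depth $O(\log m)$ and size still $2^{o(m)}$. Homogeneity ensures that at every gate we compute a polynomial of known degree, and lets us restrict attention to the homogeneous components of products.

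The key step is the Agrawal--Vinay ``unfolding'' argument. Fix a splitting parameter $k$ (to be chosen later). By inductively descending through the $O(\log m)$-depth circuit and replacing each product gate of degree $> k$ by a sum over its child-gate splittings, one shows that $f$ can be expressed as
\begin{equation*}
f \;=\; \sum_{i=1}^{N}\; \prod_{j=1}^{\ell_i} g_{i,j}(X_1,\ldots,X_m),
\end{equation*}
where each $g_{i,j}$ has degree at most $k$, each product has $\ell_i = O(m/k)$ factors, and the number of summands satisfies $N \leq s^{O(m/k)}$. The bound on $N$ comes from a proof-tree argument: each summand corresponds to a choice, at every ``split point,'' of one child among a polynomial-in-$s$ number of possibilities, and only $O(m/k)$ split points are needed.

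Now each $g_{i,j}$ is a polynomial of degree at most $k$ in $m$ variables, and can be expanded as a $\Sigma\Pi$ formula listing its monomials, of size at most $\binom{m+k}{k} \leq (em/k)^k = 2^{O(k\log(m/k))}$, with bottom fan-in at most $k$. Concatenating the outer $\Sigma\Pi$ with these inner $\Sigma\Pi$ expansions yields a $\Sigma\Pi\Sigma\Pi$ formula for $f$ of total size
\begin{equation*}
O\!\left( s^{O(m/k)} \cdot (m/k) \cdot 2^{O(k\log(m/k))} \right).
\end{equation*}

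Finally, I would tune $k$ so that the exponents $(m/k)\log s$ and $k\log(m/k)$ are both $o(m)$; since $\log s = o(m)$, choosing $k$ in an intermediate range (e.g.\ $k \asymp \sqrt{m\log s}$, suitably adjusted) makes this possible and yields total size $2^{o(m)}$. The main obstacle is the unfolding step and its proof-tree counting: one must carefully identify the splitting points in the log-depth circuit, verify that the product structure indeed telescopes into the claimed form with $N \leq s^{O(m/k)}$ summands, and ensure that homogeneity (hence the multilinear specialization) is preserved throughout so that the degree control on each $g_{i,j}$ really holds. The parameter balancing itself is then a routine calculation once the structural decomposition is in place.
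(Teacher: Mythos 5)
The paper does not actually prove this statement: Theorem~\ref{reduction} is quoted verbatim from Agrawal and Vinay (Corollary~2.5 of \cite{AgraVinay08}), and the refined quantitative version used later (Theorem~\ref{vpzero2depth4}) is proved in the companion paper \cite{Koi4d}. Your outline is essentially the standard proof of the cited result, and also the route taken in \cite{Koi4d}: homogenize, apply VSBR-style depth reduction to depth $O(\log d)$, cut at degree $k$ to obtain a $\Sigma\Pi\Sigma\Pi$ expression with $s^{O(m/k)}$ summands whose factors have degree at most $k$, expand each factor into at most $\binom{m+k}{k}$ monomials, and balance $k$ (your choice $k \asymp \sqrt{m\log s}$ does work: with $\log s = o(m)$ both exponents $(m/k)\log s$ and $k\log(m/k)$ are $o(m)$). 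The one substantive step you defer --- the unfolding/proof-tree lemma giving $N \le s^{O(m/k)}$ summands with $O(m/k)$ factors each of degree at most $k$ --- is the heart of the argument and depends on the degree-balance property of the VSBR normal form (the right argument of every multiplication gate has at most half the gate's degree); you flag it correctly, and once it is in place the parameter tuning is indeed routine.
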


But what if we start from arithmetic circuits of size smaller than $2^{o(m)}$ 
(for instance, of size polynomial in $m$) ?
It is reasonable to expect that the size of the corresponding 
depth four circuits will be reduced accordingly, but such a result
cannot be found in~\cite{AgraVinay08}.
We can however prove the following result~\cite{Koi4d}.
\begin{theorem} \label{vpzero2depth4}
Let $(f_n)$ be a $\vpzero$ family of polynomials of degree $d_n=\deg(f_n)$.
This family can be computed by a family $(\Gamma_n)$ of depth four circuits
with $n^{O(\log d_n)}$ addition gates and $n^{O(\sqrt{d_n}\log d_n)}$
multiplication gates.
The family $(f_n)$ can also be computed by a family $(F_n)$ of depth four
arithmetic formulas of size $n^{O(\sqrt{d_n}\log d_n)}$.
The inputs to $\Gamma_n$ and $F_n$ are variables of $f_n$ 
or relative integers 
of polynomial bit size; their multiplication gates are of fan-in 
$O(\sqrt{d_n})$.
\end{theorem}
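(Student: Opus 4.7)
My plan is to use Valiant--Skyum--Berkowitz--Rackoff (VSBR) depth reduction as a first step, then perform a one-shot ``cut'' of the resulting log-depth circuit to obtain the depth-four structure. Fix the polynomial bound $p(n)$ and a constant-free circuit $C_n$ of size and formal degree at most $p(n)$ computing $f_n$; after the standard homogenization (polynomial blowup in size, preserved formal degree), apply VSBR to obtain an equivalent fan-in-two arithmetic circuit $C'_n$ of depth $O(\log d_n)$ and size polynomial in $n$, in which a gate at depth $k$ from the inputs computes a polynomial of formal degree at most $2^k$.

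Cut $C'_n$ at the ``middle layer'' $V_{\mathrm{mid}}$ consisting of the gates at depth $\lceil\log_2\sqrt{d_n}\rceil$ from the inputs. Each $v \in V_{\mathrm{mid}}$ computes a polynomial $g_v \in \zz[X_1,\ldots,X_{p(n)}]$ of degree at most $\sqrt{d_n}$, and $|V_{\mathrm{mid}}| \le \mathrm{poly}(n)$. Introducing fresh formal variables $Y_v$ for $v\in V_{\mathrm{mid}}$, the top portion of $C'_n$ (from the middle up to the output) computes a polynomial $P(Y)$ of degree at most $\sqrt{d_n}$ in the $Y_v$'s. Build $\Gamma_n$ by expanding $P$ as a sum of monomials in the $Y_v$'s and each $g_v$ as a sum of monomials in the $X_i$'s, then composing. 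Since $P$ has degree at most $\sqrt{d_n}$ in $\mathrm{poly}(n)$ formal variables, it is a sum of at most $(\mathrm{poly}(n)+\sqrt{d_n})^{\sqrt{d_n}} \le n^{O(\sqrt{d_n}\log d_n)}$ monomials of degree at most $\sqrt{d_n}$ in the $Y_v$'s, and similarly for each $g_v$ in the $X_i$'s.

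The resulting $\Sigma\Pi\Sigma\Pi$ circuit $\Gamma_n$ thus consists of: one top $\Sigma$ gate (whose fan-in equals the number of $Y$-monomials of $P$); $n^{O(\sqrt{d_n}\log d_n)}$ depth-3 $\Pi$ gates of fan-in $O(\sqrt{d_n})$; $|V_{\mathrm{mid}}| \le \mathrm{poly}(n)$ depth-2 $\Sigma$ gates, one per middle polynomial $g_v$, shared across all depth-3 products; and $|V_{\mathrm{mid}}| \cdot n^{O(\sqrt{d_n}\log d_n)} = n^{O(\sqrt{d_n}\log d_n)}$ depth-1 $\Pi$ gates of fan-in $O(\sqrt{d_n})$. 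The total addition count is $1 + |V_{\mathrm{mid}}| \le \mathrm{poly}(n) \le n^{O(\log d_n)}$ and the total multiplication count is $n^{O(\sqrt{d_n}\log d_n)}$, matching the theorem. The depth-4 formula $F_n$ follows by un-sharing the depth-2 subcircuits, which blows the size up by at most a factor equal to the number of depth-3 $\Pi$ gates, so $F_n$ still has size $n^{O(\sqrt{d_n}\log d_n)}$.

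The main technical obstacle is controlling the integer coefficients appearing in the two monomial expansions: each such coefficient is a signed sum indexed by ``parse trees'' of the appropriate sub-circuit of $C'_n$, and since $C'_n$ has polynomial size, the number of parse trees is at most $2^{\mathrm{poly}(n)}$, so the coefficients are integers of polynomial bit size, as required. A secondary point is that VSBR has to be carried out in a way that introduces only constants of polynomial bit size in $C'_n$ so that the ``relative integers of polynomial bit size'' constraint on the inputs of $\Gamma_n$ and $F_n$ is preserved through all the expansions; this is standard but must be tracked.
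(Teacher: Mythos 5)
Your reduction has a genuine gap at the cut step. You cut $C'_n$ at the fixed depth $\lceil\log_2\sqrt{d_n}\rceil$ and assert that the top portion computes a polynomial $P(Y)$ of degree at most $\sqrt{d_n}$ in the variables $Y_v$; nothing in the argument supports this. VSBR gives depth $O(\log d_n)$ with a constant you do not control, so after removing the bottom $\tfrac{1}{2}\log_2 d_n$ levels the top part still has depth $c\log_2 d_n$ for some constant $c$ that may well exceed $\tfrac{1}{2}$, and with fan-in two multiplications the $Y$-degree of $P$ can be as large as $d_n^{\,c}$, not $\sqrt{d_n}$. Nor can you deduce the bound from the degree of $f_n$: depth is not a proxy for degree (addition gates consume depth without raising degree), and some gates at the cut level may compute constants or polynomials of very low degree, so the identity $P(\ldots,g_v,\ldots)=f_n$ of $X$-degree $d_n$ puts no upper bound on the $Y$-degree of $P$. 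Since the depth-3 multiplication gates of your $\Gamma_n$ are exactly the $Y$-monomials of $P$ and their fan-in is the $Y$-degree of $P$, the two quantitative conclusions that the theorem is really about --- $n^{O(\sqrt{d_n}\log d_n)}$ multiplication gates of fan-in $O(\sqrt{d_n})$ --- are precisely the ones left unproven; a bound of the form $n^{O(d_n^{c}\log d_n)}$ with fan-in $d_n^{c}$, $c>1/2$, would be useless for the applications in Sections~\ref{sps_section}--\ref{lb_section}.

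The repair is to cut by \emph{degree} rather than by depth, which is how the proof cited for this theorem \cite{Koi4d} (and the argument of \cite{AgraVinay08}) proceeds: put the circuit into a homogeneous normal form in which the two arguments of every multiplication gate have comparable degrees, and truncate the parse trees of $f_n$ at the maximal gates of degree at most $\sqrt{d_n}$. The balance property guarantees that each cut gate has degree $\Omega(\sqrt{d_n})$ (its parent has degree larger than $\sqrt{d_n}$), so the degrees of the leaves of a truncated parse tree, which sum to at most $d_n$, force each tree to have only $O(\sqrt{d_n})$ leaves; this is what yields the fan-in bound on the depth-3 products, the $n^{O(\sqrt{d_n}\log d_n)}$ bound on their number, and the fact that the depth-2 sums range over few distinct gate polynomials (whence the $n^{O(\log d_n)}$ bound on addition gates). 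Your closing remarks on coefficients --- parse-tree sums over a polynomial-size constant-free circuit of polynomial formal degree, hence integers of polynomial bit size --- are fine, but only once this degree-based decomposition replaces the fixed-depth cut.
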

For instance, if the permanent is in $\vpzero$ it can be computed by depth four
arithmetic formulas of size $n^{O(\sqrt{n}\log n)}$.
Compared to~\cite{AgraVinay08}, 
there are mainly two new elements in Theorem~\ref{vpzero2depth4}:
\begin{itemize}
\item[(i)] The size bounds for the depth-four circuits $(\Gamma_n)$ 
and $(F_n)$.
\item[(ii)] The bit size bound for the inputs of these circuits.
\end{itemize}
Our main results rely on these two new elements. In particular, we use (i)
to show that constructing hitting sets of slightly superpolynomial size
will still imply that the permanent is not in $\vpzero$.
To the author's knowledge, an analysis of the size of constants 
created in the depth-reduction procedure
of~\cite{AgraVinay08} has not been carried out yet.

We can formulate Theorem~\ref{vpzero2depth4}
 in more traditional mathematical language.
\begin{corollary} \label{vpzero2SPS}
Let $(f_n)$ be a $\vpzero$ family of polynomials of degree $d_n=\deg(f_n)$.
Each $f_n$ can be represented by an expression of the form 
$\sum_{i=1}^k \prod_{j=1}^m f_{ij}$ where $k=n^{O(\sqrt{d_n}\log d_n)}$
and 
$m=O(\sqrt{d_n})$.
The $f_{ij}$ are polynomials of degree 
$O(\sqrt{d_n})$ and their coefficients are relative integers of polynomial
bit size. Moreover, the sum of the number of monomials in all the $f_{ij}$ 
is $n^{O(\sqrt{d_n}\log d_n)}$,
and there are only $n^{O(\log d_n)}$ distinct $f_{ij}$.
\end{corollary}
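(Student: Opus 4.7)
The plan is to read the desired $\sum \prod \sum \prod$ representation directly off the depth-four circuit (and formula) produced by Theorem~\ref{vpzero2depth4}, so that the content of the corollary is essentially a translation from circuit jargon into polynomial-algebra notation. The key observation is that a depth-four $\Sigma\Pi\Sigma\Pi$ circuit with appropriate fan-in bounds is, by definition, a representation of the form $\sum_{i=1}^k \prod_{j=1}^m f_{ij}$, where each inner polynomial $f_{ij}$ is the output of a third-layer addition gate, and each top-layer summand is a product of such addition-gate outputs.

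First I would apply Theorem~\ref{vpzero2depth4} to obtain the circuit $\Gamma_n$ and the formula $F_n$. The top addition gate gives the outer sum, and its fan-in $k$ is at most the total number of second-layer multiplication gates, hence $k \leq n^{O(\sqrt{d_n}\log d_n)}$. The second-layer multiplication gates have fan-in $m = O(\sqrt{d_n})$ by the multiplication-gate fan-in bound in the theorem. Each third-layer addition gate $f_{ij}$ is a sum of monomials produced by bottom-layer $\Pi$-gates of fan-in $O(\sqrt{d_n})$, so $\deg f_{ij} = O(\sqrt{d_n})$. Since the bottom-layer $\Pi$-gates take as inputs either variables of $f_n$ or integers of polynomial bit size, each coefficient of $f_{ij}$ is a sum of polynomially many products of $O(\sqrt{d_n})$ integers of polynomial bit size, and so is itself of polynomial bit size.

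The remaining two counts come from the two different resource bounds stated in the theorem. For the number of distinct $f_{ij}$, I would use the circuit $\Gamma_n$: the distinct $f_{ij}$ correspond to the distinct third-layer addition gates, and their number is at most the total number of addition gates, namely $n^{O(\log d_n)}$. For the total number of monomials summed across all $f_{ij}$, I would switch to the formula $F_n$ of size $n^{O(\sqrt{d_n}\log d_n)}$: in the tree $F_n$ the number of bottom-layer monomial occurrences is bounded by the formula size. The main obstacle, if any, is purely bookkeeping --- keeping straight which of the theorem's bounds applies to $k$, to $m$, to the distinct-polynomial count, and to the total-monomial count, and remembering to use the circuit $\Gamma_n$ for the former and the formula $F_n$ for the latter. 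No new mathematical content beyond Theorem~\ref{vpzero2depth4} should be needed.
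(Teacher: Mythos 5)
Your proposal follows essentially the same route as the paper: the corollary is read directly off the depth-four circuit of Theorem~\ref{vpzero2depth4}, with depth-1 product gates giving monomials, depth-2 addition gates giving the $f_{ij}$, depth-3 product gates giving the products $\prod_j f_{ij}$, and the output gate the outer sum, exactly as in the paper's proof sketch. Two small caveats. First, your coefficient bound says each coefficient of an $f_{ij}$ is a sum of \emph{polynomially many} products of bottom-level integers; in fact the number of depth-1 multiplication gates feeding a single addition gate (and hence contributing to a single monomial) can be as large as $n^{O(\sqrt{d_n}\log d_n)}$, which is superpolynomial in $n$ when $d_n$ grows -- this is precisely the point of the footnote in the paper's proof. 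The conclusion survives because summing $N$ integers of polynomial bit size only adds $O(\log N)=O(\sqrt{d_n}\log d_n\log n)$ bits, which is polynomial in $n$ since $d_n$ is polynomially bounded for a $\vpzero$ family; but the count itself should be stated correctly. Second, you bound the total monomial count via the formula $F_n$ while bounding the number of distinct $f_{ij}$ via the circuit $\Gamma_n$; to get a single representation satisfying all the bounds it is cleaner to stay with $\Gamma_n$ throughout, noting (as the paper does) that each $f_{ij}$, being of degree $O(\sqrt{d_n})$ in $n^{O(1)}$ variables, has at most $n^{O(\sqrt{d_n})}$ monomials, so the total over the $k\cdot m=n^{O(\sqrt{d_n}\log d_n)}$ slots is again $n^{O(\sqrt{d_n}\log d_n)}$.
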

\begin{proof}[Sketch]
Each multiplication gate at depth 1 in the depth four circuit of 
Theorem~\ref{vpzero2depth4} computes a monomial. Each addition gate at 
depth 2 computes a $f_{ij}$. A multiplication gate at depth 3 computes
an expression of the form $\prod_j f_{ij}$. The output gate computes the
final sum. 

In fact, several multiplication gates at depth 1 may contribute to the 
same monomial of a $f_{ij}$ and the monomial will be obtained as the 
sum of the outputs of these multiplication gates.\footnote{This is bound to happen since as a polynomial of degree $O(\sqrt{d_n})$ in $n^{O(1)}$ variables
$f_{ij}$ can have at most $n^{O(\sqrt{d_n})}$ monomials, but there are many more
multiplication gates.}
Taking this sum preserves
the polynomial size bound on coefficients since there are only 
$n^{O(\sqrt{d_n}\log d_n)}$ multiplication gates.
\end{proof}
A polynomial is sparse if it has few monomials compared to the maximal
number of monomials possible given its degree and number of variables
(recall that for a polynomial in $n$ variables 
of degree $d$, this number is 
$n+d \choose d$). There is no reason for the $f_{ij}$ to be sparse in general
(but they have few terms compared to the maximum possible for $f_n$).
As explained in the next section, if we replace the variables of the $f_{ij}$ by a quickly growing sequence
of powers of a single variable $X$, we obtain truly sparse 
univariate polynomials.

\section{Sums of Products of Sparse Polynomials} \label{sps_section}

A sums of products of sparse polynomials is an expression of the form $\sum_i \prod_j f_{ij}$ where each $f_{ij} \in \zz[X]$ is a sparse univariate polynomial. Here ``sparse'' means as usual that we only represent the nonzero monomials of each $f_{ij}$.
As a result one can represent concisely polynomials of very high degree. We define the size of such an expression as the
sum of the number of monomials in all the $f_{ij}$. Note that this measure of size does not take into account the size of 
the coefficients of the $f_{ij}$, or their degrees. These relevant parameters are taken into account in the following definition.
\begin{definition} \label{SPSdef}
We denote by ${\rm SPS}_{s,e}$ the set of all polynomials in $\zz[X]$ which can be represented by an expression of the form $\sum_i \prod_j f_{ij}$ so that:
\begin{itemize}
\item The size of the expression as defined above is at most $s$.
\item Each coefficient of each $f_{ij}$ can be written as the difference of two nonnegative integers with at most $s$ nonzero digits 
in their binary representations.
\item These coefficients are of absolute value at most $2^e$, and the $f_{ij}$ are of degree at most $e$.
\end{itemize}
\end{definition}

\begin{remark}
The polynomials $f_{ij}$ in this definition
can be thought of as "sparse polynomial with sparse coefficients". The integer $s$ serves as a sparsity parameter for
the number of monomials as well as for the number of digits in their 
coefficients. 
A typical choice for these parameters is $s=2^{o(n)}$ and $e=2^{O(n)}$, where 
$n$ represents an input size (see for instance Theorem~\ref{rep_theorem} in Section~\ref{lb_section}).
\end{remark}
We will show in Section~\ref{lb_section} that constructing polynomial size hitting sets for sums of products of sparse polynomials implies the lower bound 
$\per {\not \in} \vp^0$. 
Here ``polynomial size'' means polynomial in $s +\log e$.
It is quite natural to insist on a size bound which is polynomial in $s$ and $\log e$: $s$ is an arithmetic circuit size bound, 
and $\log e$ can also be interpreted as an arithmetic cost since each power $x^{\alpha}$ in an $f_{ij}$ can be computed 
from $x$ in $O(\log e)$ operations by repeated squaring. 
Likewise, we can write each coefficient of each $f_{ij}$ as the difference 
of two nonnegative integers as in Definition~\ref{SPSdef}, 
and each of the $\leq s$ powers of 2 occuring in a nonnegative integer
can be computed from the constant 2 in $O(\log e)$ operations.
Each coefficient can therefore be computed in $O(s\log e)$ operations.
As a result, a polynomial in ${\rm SPS}_{s,e}$ can be evaluated from the constant 1 and the variable $X$ in a number of arithmetic operations which is polynomial in $s+\log e$.

The size of a SPS polynomial as we have defined it is essentially the size of a depth three  arithmetic circuit (or more precisely of a depth three arithmetic formula) computing the polynomial. In this depth three formula each input gate carries a monomial; each addition gate at level 1 computes a $f_{ij}$; each multiplication gate at level 2 computes a product of the form $\prod_j f_{ij}$;
and the output gate at level 3 computes the final sum.

We can further refine this representation of SPS polynomials 
by arithmetic formulas. Namely, instead of viewing the monomial $aX^{\beta}$ as
an atomic object which is fed to an input gate, we can decompose it as a sum of terms of the form $\pm 2^{\alpha}X^{\beta}$; and each term can be further decomposed as a product of factors of the form $\pm 2^{2^i}$ and $X^{2^j}$.
The resulting object is a depth four formula where each input gate carries an expression of the form $\pm 2^{2^i}$ or $x^{2^j}$ (note the symmetry between variables and constants in this representation).
This connection between depth four formulas and SPS polynomials plays a crucial role in our results.
In particular, we will use the following result in Section~\ref{lb_section}.
\begin{proposition} \label{subs_prop}
Let $(f_n(\overline{x},\overline{z}))$ be a $\vpzero$ family 
of multilinear polynomials,
with $\overline{x}$ and $\overline{z}$ two tuples of variables of length
$c \cdot n$ each (for some constant~$c$). 
Let $f'_n(x)$ be the univariate polynomial defined from $f_n$ by the substitution:
\begin{equation} \label{univ_subs}
f'_n(x)=f_n(x^{2^0},x^{2^1},\dots,x^{2^{c\cdot n-1}},2^{2^0},2^{2^1},\dots,2^{2^{c\cdot n-1}}).
\end{equation}
The $f'_n$ belong to ${\rm SPS}_{s,e}$ where $s=n^{O(\sqrt{n}\log n)}$
and $e=2^{O(n)}$.

More precisely, each $f'_n$ can be represented by an expression of the form 
$\sum_{i=1}^k \prod_{j=1}^m f'_{ij}$ where $k=n^{O(\sqrt{n}\log n)}$
and $m=O(\sqrt n)$. The $f'_{ij}$ are polynomials of degree
$2^{O(n)}$ and have at most $n^{O(\sqrt{n})}$ nonzero monomials.  
Each coefficient of a monomial can be written as the difference  
of two non-negative integers of bit size $2^{O(n)}$ with at most
$n^{O(\sqrt{n})}$ nonzero digits. 
Moreover,  the sum of the number of monomials in all the $f'_{ij}$ 
is $n^{O(\sqrt{n}\log n)}$, 
and there are only $n^{O(\log n)}$ distinct~$f'_{ij}$.
\end{proposition}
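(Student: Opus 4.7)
The plan is to apply Corollary~\ref{vpzero2SPS} to the multilinear $\vpzero$ family $(f_n)$ and then analyse what the substitution (\ref{univ_subs}) does term by term. Since $f_n$ is multilinear in $2cn$ variables, we have $d_n = \deg(f_n) \leq 2cn = O(n)$, so the corollary yields a representation $f_n = \sum_{i=1}^k \prod_{j=1}^m f_{ij}$ with $k = n^{O(\sqrt{n}\log n)}$, $m = O(\sqrt n)$, with only $n^{O(\log n)}$ distinct $f_{ij}$, with each $f_{ij}$ a polynomial of degree $O(\sqrt n)$ in the $2cn$ variables $(\overline x,\overline z)$ with integer coefficients of polynomial bit size, and with total monomial count $n^{O(\sqrt n \log n)}$. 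Substituting according to (\ref{univ_subs}) commutes with sums and products, so letting $f'_{ij}$ denote the image of $f_{ij}$ under the substitution, I get $f'_n = \sum_i \prod_j f'_{ij}$ with the right values of $k$ and $m$, and with at most $n^{O(\log n)}$ distinct factors.

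Next I would analyse an individual factor $f'_{ij}$. A monomial $a \cdot \overline{x}^{\,\overline a}\overline{z}^{\,\overline b}$ of $f_{ij}$ with $\sum_\ell a_\ell + \sum_\ell b_\ell \leq O(\sqrt n)$ becomes, after substitution, a single term $a \cdot 2^{\beta} \cdot x^{\alpha}$ with
$\alpha = \sum_{\ell=1}^{cn} a_\ell 2^{\ell-1}$ and $\beta = \sum_{\ell=1}^{cn} b_\ell 2^{\ell-1}$.
Since $a_\ell,b_\ell \leq O(\sqrt n)$ and $\ell \leq cn$, we get $\alpha,\beta \leq O(\sqrt n)\cdot 2^{cn} = 2^{O(n)}$. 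Hence $f'_{ij}$ has degree $2^{O(n)}$, and the number of monomials of $f'_{ij}$ is at most that of $f_{ij}$, which is $n^{O(\sqrt n)}$ (a polynomial of degree $O(\sqrt n)$ in $O(n)$ variables). The total monomial count across all $f'_{ij}$ therefore inherits the bound $n^{O(\sqrt n \log n)}$ from the multivariate side.

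The delicate step, and the one I expect to be the main obstacle in bookkeeping, is bounding the coefficients in the SPS sense of Definition~\ref{SPSdef}. After collecting like powers of $x$, the coefficient of $x^{\alpha}$ in $f'_{ij}$ is a sum $c_\alpha = \sum_{t=1}^N a_t\, 2^{\beta_t}$ with $N \leq n^{O(\sqrt n)}$ and $|a_t| \leq 2^{n^{O(1)}}$. Writing each $a_t = a_t^+ - a_t^-$ with non-negative parts yields $c_\alpha = C^+ - C^-$ where $C^{\pm} = \sum_t a_t^{\pm} 2^{\beta_t}$. Because multiplication by $2^{\beta_t}$ merely shifts bits, the positive integer $a_t^{\pm}\,2^{\beta_t}$ has at most $n^{O(1)}$ nonzero binary digits, so $C^{\pm}$ has at most $N\cdot n^{O(1)} = n^{O(\sqrt n)}$ nonzero digits in total, and bit size at most $n^{O(1)} + \beta_{\max} = 2^{O(n)}$. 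This yields the required sparse-coefficient representation and confirms $f'_n \in \mathrm{SPS}_{s,e}$ with $s = n^{O(\sqrt n \log n)}$ and $e = 2^{O(n)}$, completing the proposition.
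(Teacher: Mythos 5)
Your proposal is correct and follows essentially the same route as the paper: invoke Corollary~\ref{vpzero2SPS} on the multilinear $\vpzero$ family, push the substitution~(\ref{univ_subs}) through the sum-of-products, and observe that each monomial turns into a shifted coefficient times a power of $x$, giving the stated degree, monomial-count, and sparse-coefficient bounds. The only point stated a bit loosely (in the paper's sketch as well) is the nonzero-digit count of $C^{\pm}$, since carries can occur when summing the shifted coefficients; but the bound survives because adding a single power of $2$ increases the number of nonzero bits by at most one, so a sum of $n^{O(\sqrt{n})}\cdot n^{O(1)}$ powers of $2$ still has at most $n^{O(\sqrt{n})}$ nonzero bits.
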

\begin{proof}[Sketch]
This is a fairly straightforward consequence of Corollary~\ref{vpzero2SPS}.
In particular, we have at most $n^{O(\sqrt{n})}$ monomials in $f'_{ij}$ 
because this is also
an upper bound on the number of monomials in the corresponding polynomials
$f_{ij}$ of Corollary~\ref{vpzero2SPS}. 
The effect of multiplication by the powers
of two in~(\ref{univ_subs}) is to shift the coefficients of the $f_{ij}$ 
without increasing their bit size, and we need to add (and subtract)  
$n^{O(\sqrt{n})}$ shifted coefficients to obtain a coefficient of a $f'_{ij}$.
\end{proof}

\section{Algebraic number generators} \label{gen_section}

As explained in Section~\ref{org}, we wish to construct hitting sets by taking
the sets of all roots of the polynomials in an initial segment
of an efficiently computable sequence of polynomials.
The following definition makes the notion of ``efficiently computable'' 
precise (compare with the notion of {\em hitting set generator} in ~\cite[Section~6.2]{Agra05}).
\begin{definition} \label{gendef}
An algebraic number generator is a sequence $(f_i)_{i \geq 1}$  of nonzero univariate polynomials $f_{i}(X)=\sum_{\alpha} a(\alpha,i)X^{\alpha}$
such that for some integer constant $c \geq 1$:
\begin{enumerate}
\item The exponents $\alpha$ range from 0 to $i^c$;
 \item $a(\alpha,i)$ is a sequence of integers of absolute
  value $ \leq 2^{i^c}$;
  \item The language $L(f)=\{(\alpha,i,j,b);\ \mbox{ the $j$-th bit of $a(\alpha,i)$ is equal to }b\}$ is in \chpoly. 
\end{enumerate}
\end{definition}

In the above definition we work with the complexity class $\chpoly$ because this is the largest complexity class for which our proofs go through. As shown in the next example, the language $L(f)$ can often be located in a much smaller complexity class.
\begin{example}
Each of the three sequences $f_{i}=x-i$, $(x^i-1)$ or $x^i-2^ix+i^2+1$ 
is an algebraic number generator. 
Notice that in these three examples we can compute the coefficients of the $f_{i}$ in polynomial time rather than in $\chpoly$, i.e., there is no need for counting and the construction of the $f_i$ is uniform.
\end{example}
\begin{theorem} \label{bigproduct}
Let $(f_i)$ be an algebraic number generator.
From this sequence we define a  family of univariate polynomials $g_n$ by the formula:
$$g_n(x)=\prod_{i=1}^{2^n}f_i(x).$$
The coefficient sequence $b(n,\alpha)$ of $g_n$,
defined by $g_n(x)=\sum_{\alpha}b(n,\alpha)x^{\alpha}$,
is definable in $\ch/\poly$.
\end{theorem}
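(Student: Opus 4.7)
The plan is to invoke Theorem~\ref{coeffseq}: it suffices to establish that the family $(g_n)$ fits the size bounds of Definition~\ref{coeff_def} and that it can be evaluated in $\chpoly$ at integer points. First I would verify the prerequisite bounds. The degree of $g_n$ is $D_n \leq \sum_{i=1}^{2^n} i^c \leq 2^{(c+1)n+c}$, and each coefficient of $g_n$ is bounded in absolute value by
$$\prod_{i=1}^{2^n} \|f_i\|_1 \;\leq\; \prod_{i=1}^{2^n} (i^c+1)\, 2^{i^c} \;=\; 2^{2^{O(n)}},$$
so both the degree of $g_n$ and the bitsize of its coefficients are bounded by $2^{p(n)}$ for some fixed polynomial $p$.

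Next I would show that $g_n$ can be evaluated in $\chpoly$ at an integer point $x$ with $0 \leq x < 2^{p(n)}$. The computation $g_n(x)=\prod_{j=1}^{2^n} f_j(x)$ naturally splits into two pieces. First, for each $j \leq 2^n$, evaluating $f_j(x) = \sum_{\alpha=0}^{j^c} a(\alpha,j)\, x^{\alpha}$ requires access to the bits of $a(\alpha,j)$ (available via the $\chpoly$ language $L(f)$), together with an iterated sum of at most $2^{cn}$ integers of bitsize $2^{O(n)}$ (each of the form $a(\alpha,j)x^{\alpha}$, whose individual computation involves a power $x^{\alpha}$ that is itself an iterated product). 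Both iterated sums and iterated products of exponentially many exponentially large integers can be computed in the counting hierarchy (this is the foundational result behind Theorem~3.10 of~\cite{Burg09}, which is the key place where the counting hierarchy enters the arithmetic setting). Second, one must form the iterated product of the $2^n$ factors $f_j(x)$, each of bitsize $2^{O(n)}$, which is again a $\ch$-computable iterated multiplication.

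The two pieces have to be composed: the factors $f_j(x)$ are themselves defined by a $\chpoly$ procedure (because the coefficients $a(\alpha,j)$ are only given through $L(f)\in\chpoly$, and we inherit advice from that language), and the outer iterated product is a further $\ch$ computation that queries these factors. The required composition works because $\ch$ is closed under $\ch$ oracle queries (each level $\mathsf{C}_i\p$ is stable under padding one more layer of counting), and the polynomial advice strings from the two stages can be concatenated while remaining of polynomial length. This yields a language in $\chpoly$ that outputs, on input $(1^n,x,k,b)$, the value of the $k$-th bit of $g_n(x)$, which is exactly the evaluation condition in the hypothesis of Theorem~\ref{coeffseq}. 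Applying that theorem, the coefficient sequence $b(n,\alpha)$ is definable in $\chpoly$, as claimed.

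The main obstacle is the second step: the careful composition of the two $\ch$-computations and the verification that the advice arising from the non-uniformity in $L(f)$ is absorbed into a single polynomial-size advice string for the overall $\chpoly$ algorithm. This is essentially bookkeeping, but one needs to invoke closure of $\ch$ under its own oracle queries in order to combine ``evaluate a factor'' with ``take the iterated product of $2^n$ factors'' without escaping the counting hierarchy. Once this is set up, the rest of the argument is a direct appeal to the counting-hierarchy algorithms for iterated integer sums and products together with Theorem~\ref{coeffseq}.
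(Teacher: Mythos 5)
Your proposal is correct and follows essentially the same route as the paper: both arguments reduce the claim to showing that $(g_n)$ can be evaluated in $\chpoly$ at integer points, which follows from the closure of $\chpoly$-definable integer sequences under exponential-size sums and products (Theorem~3.10 of~\cite{Burg09}), and then conclude via Theorem~\ref{coeffseq}. The only difference is that you spell out the degree/bitsize bounds and the advice/composition bookkeeping, which the paper leaves implicit.
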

\begin{proof}
The family $(g_n)$ can be evaluated in \chpoly\ at integer points.
This follows from the fact that integer sequences definable in \chpoly\ 
are stable under products and summations~\cite[Theorem~3.10]{Burg09}.
The result then follows from Theorem~\ref{coeffseq}.
\end{proof}
We illustrate this result on two examples.
\begin{example}
For $f_i = x-i$ we have $g_n(x)=\prod_{i=1}^{2^n}(x-i)$. This is the Pochhammer-Wilkinson polynomial of order $2^n$.
It was shown in~\cite[proof of Main Theorem~1.2]{Burg09} that the coefficient sequence of Pochhammer-Wilkinson polynomials is definable in $\ch$.
\end{example}
\begin{example}
For $f_i = x^i-1$ we have $g_n(x)=\prod_{i=1}^{2^n}(x^i-1)$. 
This product can be written as 
\begin{equation} \label{prodeq} 
g_n(x)=\prod_{\overline{\epsilon}}h_n(x,\overline{\epsilon})
\end{equation}
 where the auxiliary family $h_n$ is defined by: $$h_n(x,\epsilon_1,\ldots,\epsilon_n)=x\prod_{j=1}^n[(1-\epsilon_j)+\epsilon_j x^{2^{j-1}}]-1.$$
Note that the powers $x^{2^{j-1}}$ in the above formula can be computed efficiently by repeated squaring.
The family $(h_n)$ therefore belongs to the class $\vp^0_{nb}$ of polynomials that can be evaluated in a polynomial number of arithmetic operations in the constant-free unbounded-degree model. It then follows from~(\ref{prodeq}) that $g_n$ belongs to the class $\vpip^0$
(by definition, the families of this class are obtained as in~(\ref{prodeq}) from a $\vp^0_{nb}$ family by taking an exponential-size product over a $\vp^0_{nb}$ family).
It is shown in~\cite[Theorem~3.7]{KoiPe10} that the class $\vpip^0$ would collapse to $\vp^0_{nb}$ if $\vnp^0$
collapses to $\vp^0$. The proof of this theorem is based on definability of coefficients in $\chpoly$ for $\vpip^0$ families (in our particular example there is again no need for nonuniformity since the family $(f_i)$ is uniform).
\end{example}

\section{From a Hitting Set to a Lower Bound}
\label{lb_section}

In this section we prove our main result: constructing hitting sets for the class ${\rm SPS}_{s,e}$ of sums of products of sparse polynomials with sparse coefficients implies a lower bound for the permanent (recall that the class  ${\rm SPS}_{s,e}$ 
is defined in Section~\ref{sps_section}).

We begin with a lemma showing that under the assumption $\per \in \vpzero$,
polynomials with coefficients definable in \chpoly\ can be efficiently represented by sums of products of sparse polynomials.
This result  is an adaptation of~\cite[Lemma~3.2]{KoiPe10}, which was itself a scaled up version of~\cite[Th.~4.1(2)]{Burg09}. The main new ingredient 
is reduction to depth four as presented in Section~\ref{SPD}.
\begin{lemma} \label{chtosparse}
  Let $g_n(x)=\sum_{\alpha} a(n,\alpha) x^{\alpha}$
  where the integers $\alpha$ range from 0 to $2^{c\cdot n}-1$,
  $a(n,\alpha)$ is a sequence of integers of absolute
  value $< 2^{2^{c\cdot n}}$ definable in \chpoly, and $c$ is an integer constant (independent of $n$).

  If $\per \in \vp^0$ 
  there is a polynomially bounded function $p(n)$ such that 
$2^{p(n)}g_n \in {\rm SPS}_{s,e}$ where $s=n^{O(\sqrt{n}\log n)}$ 
and $e=2^{O(n)}$. 
\end{lemma}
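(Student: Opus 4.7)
The plan is to chain three reductions: Valiant's criterion, the collapse of $\vnp^0$ into $\vpzero$ granted by Theorem~\ref{complete}, and the depth-four reduction packaged in Proposition~\ref{subs_prop}. Concretely, I would encode the coefficients of $g_n$ bit by bit into a multilinear polynomial on $O(n)$ variables, show that polynomial is in $\vnp^0$, promote it to $\vpzero$ via the hypothesis, and finally read off an ${\rm SPS}$ representation by applying the doubly-exponential substitution $x_k\mapsto x^{2^{k-1}}$, $z_\ell\mapsto 2^{2^{\ell-1}}$ of Proposition~\ref{subs_prop}.

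To begin, split $a(n,\alpha)=a^+(n,\alpha)-a^-(n,\alpha)$ with $a^\pm\geq 0$ of bit length $<2^{cn}$, and expand $a^\pm(n,\alpha)=\sum_{j\in\{0,1\}^{cn}} b^\pm(n,\alpha,j)\,2^j$ with bits $b^\pm\in\{0,1\}$. Lemma~\ref{lemma_ch} gives $\chpoly=\ppoly$ under the hypothesis $\per\in\vpzero$, so the bit functions $b^\pm$ lie in $\ppoly$. Define the multilinear polynomials
\[
f^\pm_n(\overline{x},\overline{z}) = \sum_{\alpha,j\in\{0,1\}^{cn}} b^\pm(n,\alpha,j)\, \prod_{k=1}^{cn} x_k^{\alpha_k} \prod_{\ell=1}^{cn} z_\ell^{j_\ell}.
\]
Valiant's criterion (Lemma~\ref{criterion}, in the $\ppoly$ form noted just after its statement) places $f^\pm_n\in\vnp^0$. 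Theorem~\ref{complete} then produces a polynomially bounded $p(n)$ such that $2^{p(n)}f^\pm_n\in\vpzero$ (one $p$ works for both by taking the maximum), and these scaled polynomials remain multilinear.

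Now feed $2^{p(n)}f^\pm_n$ into Proposition~\ref{subs_prop}. Under the prescribed substitution, the monomial $\prod_k x_k^{\alpha_k}\prod_\ell z_\ell^{j_\ell}$ maps to $x^\alpha\cdot 2^j$, so the univariate specialization of $2^{p(n)}f^\pm_n$ is exactly $2^{p(n)}g^\pm_n(x)$, and each piece is placed in ${\rm SPS}_{s,e}$ with $s=n^{O(\sqrt n\log n)}$ and $e=2^{O(n)}$. The class ${\rm SPS}_{s,e}$ is closed under subtraction: concatenate the two outer sums and negate one factor of each summand of the second piece. Hence $2^{p(n)}g_n=2^{p(n)}g^+_n-2^{p(n)}g^-_n$ lies in ${\rm SPS}_{s',e}$ with $s'$ still of the form $n^{O(\sqrt n\log n)}$, which is the desired conclusion.

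The substantive step is really Proposition~\ref{subs_prop}, which is where the depth-four reduction and the slightly superpolynomial size bound $n^{O(\sqrt n\log n)}$ enter; the rest of the argument is parameter hygiene. The two things worth checking carefully are that $\alpha$ and the bit index $j$ each fit in exactly $cn$ bits (so that the multilinear polynomial has precisely the $2cn$ variables required by Proposition~\ref{subs_prop}), and that $\chpoly$-access to the bits of a signed $a(n,\alpha)$ transfers to $\chpoly$-access to the bits of its nonnegative parts $a^\pm$—both are routine manipulations on the bit language defining the coefficient sequence.
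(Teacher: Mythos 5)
Your proposal is correct and follows essentially the same route as the paper's proof: binary-expand the coefficients into a multilinear polynomial on $2cn$ variables, place it in $\vnp^0$ via Lemma~\ref{lemma_ch} and Valiant's criterion, lift to $\vpzero$ with the $2^{p(n)}$ factor from Theorem~\ref{complete}, and specialize via Proposition~\ref{subs_prop}. The only cosmetic difference is that the paper absorbs the sign of $a(n,\alpha)$ directly into the ($\gapp$-based) Valiant criterion on a single polynomial $h_n$, whereas you split into nonnegative parts $a^{\pm}$ and recombine by the (easily checked) closure of ${\rm SPS}_{s,e}$ under subtraction.
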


\begin{proof}  Expand $a$ in binary: $\displaystyle a(n,\alpha)=\sum_{i=0}^{2^{c\cdot n}-1}
  a_i(n,\alpha)2^i.$ Let $h_n$ be the following multilinear polynomial:
  $$h_n(x_{1},x_{2},\dots,x_{c\cdot n},z_1,\dots,z_{c\cdot n})=\sum_{i=0}^{2^{c\cdot n}-1}\sum_{\alpha=0}^{2^{c\cdot n}-1}a_i(n,\alpha) z_1^{i_1}\cdots
  z_{c\cdot n}^{i_{c\cdot n}} x_{1}^{\alpha_1}x_{2}^{\alpha_2}\cdots
  x_{c\cdot n}^{\alpha_{c\cdot n}}.$$
In this formula, the exponents $i_j$ and $\alpha_j$ denote the binary digits
of the integers $i$ and $\alpha$.
  Then we have:
\begin{equation} \label{plug}
h_n(x^{2^0},x^{2^1},\dots,x^{2^{c\cdot n-1}},2^{2^0},2^{2^1},\dots,2^{2^{c\cdot n-1}})=g_n(x).
\end{equation}
Assume that the permanent family is in $\vp^0$. by Lemma~\ref{lemma_ch} the nonuniform counting
  hierarchy collapses, therefore computing the $i$-th bit $a_i(n,\alpha)$
  of $a(n,\alpha)$ on input $(1^n,\alpha,i)$ is in \gapppoly\ (and
  even in \p/\poly). By Lemma~\ref{criterion}, $(h_n)\in\vnpzero$. 
  By Theorem~\ref{complete} there exists a polynomially bounded function $p(n)$ 
such that the family $f_n=2^{p(n)}h_n$ is in $\vp^0$. 
Applying Proposition~\ref{subs_prop} to $(f_n)$ shows that the polynomials $f'_n=2^{p(n)}g_n$ are in ${\rm SPS}_{s,e}$ for $s=n^{O(\sqrt{n}\log n)}$ and $e=2^{O(n)}$.
\end{proof}
Next we we show that the product of the first $2^n$ polynomials 
of an algebraic number generator can be represented by
a sum of products of sparse polynomials of subexponential size, 
assuming again that the permanent is in \vpzero. 
\begin{theorem} \label{rep_theorem}
  Let $(f_i)$ be an algebraic number generator and $g_n(x)=\prod_{i=1}^{2^n} f_{i}(x)$.
  If $\per \in \vpzero$ there is a polynomially bounded function $p(n)$ such that $2^{p(n)}g_n \in {\rm SPS}_{s,e}$ where $s=n^{O(\sqrt{n}\log n)}$ and $e=2^{O(n)}$.
Here ${\rm SPS}_{s,e}$ is the class of sums of products of sparse polynomials
from Definition~\ref{SPSdef}.
\end{theorem}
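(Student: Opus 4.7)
The plan is to combine Theorem~\ref{bigproduct} with Lemma~\ref{chtosparse}: the former describes the coefficients of $g_n$ in $\chpoly$, and the latter converts such a $\chpoly$-definable family into an SPS representation under the hypothesis $\per \in \vpzero$.

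First I would invoke Theorem~\ref{bigproduct} to conclude that the coefficient sequence $b(n,\alpha)$ of $g_n(x)=\sum_\alpha b(n,\alpha)x^\alpha$ is definable in $\chpoly$; nothing new is required for this step beyond the statement of that theorem. Next I would verify that $g_n$ actually fits into the hypothesis of Lemma~\ref{chtosparse}, i.e.\ that $\deg g_n < 2^{c\cdot n}$ and $|b(n,\alpha)| < 2^{2^{c\cdot n}}$ for some constant $c$ independent of $n$. From the definition of an algebraic number generator, each $f_i$ has degree at most $i^{c_0}$ and coefficients bounded by $2^{i^{c_0}}$ in absolute value, for some fixed $c_0$, so
$$\deg g_n \;\leq\; \sum_{i=1}^{2^n} i^{c_0} \;\leq\; 2^{(c_0+1)n},$$
and, using submultiplicativity of the $\ell_1$-norm of polynomials under multiplication,
$$|b(n,\alpha)| \;\leq\; \|g_n\|_1 \;\leq\; \prod_{i=1}^{2^n}(i^{c_0}+1)\,2^{i^{c_0}} \;\leq\; 2^{2^{O(n)}}.$$
Both bounds are of the shape required by Lemma~\ref{chtosparse}, provided one takes the constant $c$ of that lemma sufficiently large compared to $c_0$.

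Finally, under the hypothesis $\per \in \vpzero$, Lemma~\ref{chtosparse} applied to $g_n$ directly supplies a polynomially bounded function $p(n)$ such that $2^{p(n)}g_n \in {\rm SPS}_{s,e}$ with $s=n^{O(\sqrt{n}\log n)}$ and $e=2^{O(n)}$, which is exactly the claim. There is no real obstacle at this final stage: the theorem is a bookkeeping composition of the two earlier results, and the only point requiring care is aligning the constant appearing in the definition of an algebraic number generator with the one in Lemma~\ref{chtosparse}. The substantive work lies upstream, inside Theorem~\ref{bigproduct} (where stability of $\chpoly$-definable sequences under exponential-size products is invoked) and inside Lemma~\ref{chtosparse} (where reduction to depth four via Proposition~\ref{subs_prop} enters).
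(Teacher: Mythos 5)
Your proposal is correct and follows the same route as the paper: invoke Theorem~\ref{bigproduct} for $\chpoly$-definability of the coefficients of $g_n$, check the degree bound $2^{O(n)}$ and the $\ell_1$-norm bound $2^{2^{O(n)}}$ via submultiplicativity, and then apply Lemma~\ref{chtosparse}. No gaps; the bookkeeping on the constants matches what the paper does.
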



\begin{proof}
We wish to apply Lemma~\ref{chtosparse} 
to the polynomial $g_n(x)=\prod_{i=1}^{2^n} f_{i}(X)$.
Each polynomial $f_{i}$ in this product is of degree less than $2^{cn}$ (except possibly $f_{n}$, which may be of degree up to $2^{cn}$). Hence  $g_n$ is of degree less than $2^{(c+1)n}$.
As to the coefficient size, we have $||g_n||_1 \leq \prod_{i}||f_{i}||_1$ where $||.||_1$ denotes the sum
of the absolute values of the coefficients of a polynomial. For each $i$ we have $||f_{i}||_1 < (2^{cn}+1)\cdot 2^{2^{cn}} \leq 2^{2^{(c+1)n}}$ so that $||g_n||_1 \leq 2^{2^{(c+2)n}}$.
Finally, definability of coefficients in $\chpoly$ is provided by Theorem~\ref{bigproduct}.\end{proof}


We can finally prove our main result.
\begin{theorem}[Lower Bound from Hitting Sets] \label{lbth}
Let $(f_{i})$ be an algebraic number generator and $H_{m}$ the set of  all roots of the polynomials $f_{i}$ for all $i \leq m$. Let  $q$ and $r$ be two functions such that $H_{q(s)+r(e)}$
is a hitting set for ${\rm SPS}_{s,e}$. The permanent is not in $\vp^0$ 
if $r(e)=e^{o(1)}$ and $q$ satisfies the following condition: for some constant
$c <1$ and $s$ large enough, $q(s) \leq 2^{(\log s)^{1+c}}$.
\end{theorem}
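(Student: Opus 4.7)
The plan is a direct proof by contradiction. Assuming $\per \in \vpzero$, the witness polynomial is $g_n(X) = \prod_{i=1}^{2^n} f_i(X)$, which is nonzero in $\zz[X]$ (each $f_i$ is nonzero) and vanishes identically on $H_{2^n}$ by construction. Feeding $g_n$ into Theorem~\ref{rep_theorem} produces a polynomially bounded $p(n)$ with $2^{p(n)} g_n \in {\rm SPS}_{s(n),e(n)}$ for $s(n) = n^{O(\sqrt{n}\log n)}$ and $e(n) = 2^{O(n)}$; this nonzero SPS polynomial still vanishes on the full set $H_{2^n}$.

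To obtain the contradiction it then suffices to show that the purported hitting set $H_{q(s(n))+r(e(n))}$ is a subset of $H_{2^n}$ for all sufficiently large $n$, since on this subset $2^{p(n)} g_n$ vanishes while the hitting-set hypothesis says it should be detected. The inclusion reduces to the inequality $q(s(n)) + r(e(n)) \leq 2^n$, and both hypotheses on $q$ and $r$ enter here. From $\log s(n) = O(\sqrt{n}\,\log^2 n)$ together with the assumption $q(s) \leq 2^{(\log s)^{1+c}}$ for some $c<1$, one obtains $q(s(n)) \leq 2^{O(n^{(1+c)/2}\log^{2(1+c)} n)}$, which is $2^{o(n)}$ precisely because $(1+c)/2 < 1$. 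The assumption $r(e) = e^{o(1)}$ likewise gives $r(e(n)) = 2^{o(n)}$. Summing, $q(s(n)) + r(e(n)) = 2^{o(n)}$, so the desired inequality holds for $n$ large enough, closing the contradiction.

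The substantive mathematical work is already packaged into Theorem~\ref{rep_theorem} and, upstream, into the sharpened depth-four reduction Theorem~\ref{vpzero2depth4}: these convert the abstract \chpoly-definability of the coefficients of $g_n$ into an explicit SPS representation whose size exponent is $O(\sqrt{n}\log n)$ rather than the $O(n)$ one would obtain from the Agrawal--Vinay reduction alone. This improvement is exactly what allows the slightly superpolynomial regime $q(s) \leq 2^{(\log s)^{1+c}}$. The only genuine ``obstacle'' in the present theorem is the exponent matching carried out in the previous paragraph: the threshold $c=1$ is tight for this method, since one needs $(\sqrt{n}\log n)^{1+c} = o(n)$, and any relaxation would prevent $H_{q(s(n))+r(e(n))}$ from sitting inside the zero set of $f_1 \cdots f_{2^n}$.
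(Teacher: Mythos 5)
Your proposal is correct and follows essentially the same route as the paper: contradiction via Theorem~\ref{rep_theorem} applied to $g_n=\prod_{i=1}^{2^n}f_i$, then checking that $q(s(n))+r(e(n))=2^{o(n)}\leq 2^n$ so that the purported hitting set sits inside $H_{2^n}$, on which the nonzero SPS polynomial $2^{p(n)}g_n$ vanishes. Your explicit exponent computation $(\log s(n))^{1+c}=O(n^{(1+c)/2}\log^{2(1+c)}n)=o(n)$ is exactly the step the paper leaves implicit.
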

The conditions on $q$ and $r$ cover in particular the case of hitting sets
of size polynomial in $s$ and $\log e$. This special case was treated
in an earlier version of this paper\footnote{\url{http://arxiv.org/abs/1004.4960v2}}. Note also that any set of more than $s\cdot e$ complex numbers is a hitting
set since any polynomial in ${\rm SPS}_{s,e}$ is of degree at most $s\cdot e$.
\begin{proof}[of Theorem~\ref{lbth}]
Let $g_n(x)=\prod_{i=1}^{2^n} f_{i}(x)$ be the polynomial of Theorem~\ref{rep_theorem}.
Assume by contradiction that:
\begin{itemize}
\item[(i)] There exists functions  $q$ and $r$ such that $H_{q(s)+r(e)}$
is a hitting set for~${\rm SPS}_{s,e}$, where $q$ and $r$ satisfy
the conditions in the statement of the theorem.
\item[(ii)] The permanent family is in $\vp^0$. 
\end{itemize}
From our second assumption and Theorem~\ref{rep_theorem} we know that $2^{p(n)}g_n$ is in ${\rm SPS}_{s,e}$
for $s=n^{O(\sqrt{n}\log n)}$, $e=2^{O(n)}$ and some polynomially bounded function~$p(n)$. The conditions on $q$ and $r$ imply that for these values of 
$s$ and $e$ we have $q(s)+r(e)=2^{o(n)}$.
Hence by (i), for $n$ large enough $H_{2^n}$ is a hitting set for $g_n$.
This is a contradiction since $g_n$ vanishes on the hitting set $H_{2^n}$ 
but is not identically 0.
\end{proof}

\section{Hitting Sets from Real Analysis ?} \label{real_section}

In this section we present our new versions of the $\tau$-conjecture.
Each of the three conjectures implies that the permanent is not in $\vpzero$. 
\begin{conjecture}[$\tau$-conjecture for SPS polynomials] \label{tauSPS}
There is a polynomial $p$ such that any nonzero polynomial in ${\rm SPS}_{s,e}$ has at most $p(s+\log e)$ integer roots.
\end{conjecture}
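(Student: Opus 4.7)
The plan is to prove the stronger statement that every nonzero $F \in {\rm SPS}_{s,e}$ has at most $(s+\log e)^{O(1)}$ \emph{real} roots; since integer roots are real, this would a fortiori imply the conjecture. This real-analytic route is the one foreshadowed by the paper's earlier remarks on Descartes' rule of signs and on Chebyshev polynomials (which exhibit exponentially many real roots for polynomial-size general circuits, but are conjectured not to admit compact SPS representations).

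First I would dispose of the base case. By Descartes' rule of signs, a sparse polynomial with $t$ nonzero monomials has at most $t-1$ positive real roots, and applying it to $f(-X)$ yields at most $2(t-1)$ nonzero real roots in total. For a single product $\prod_{j=1}^m f_{ij}$, the real zero set is the union of those of the factors, giving a bound linear in the total sparsity $\sum_j t_{ij} \leq s$. So the case of a single outer summand ($k=1$) is immediate, and the bound does not depend on the degree $e$.

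The main step is the outer sum $F = \sum_{i=1}^k \prod_j f_{ij}$. I would attempt induction on $k$ via Rolle's theorem: between any two real zeros of $F$ lies a zero of $F'$. Picking a distinguished product $P_k = \prod_j f_{kj}$, the real roots of $F$ (away from the zero set of $P_k$) coincide with those of $F/P_k = \sum_{i<k} P_i/P_k + 1$, whose derivative has the convenient logarithmic form $\sum_{i<k} (P_i/P_k) \cdot \bigl( \sum_j f_{ij}'/f_{ij} - \sum_j f_{kj}'/f_{kj} \bigr)$. After clearing denominators, one hopes to rewrite this as a new SPS polynomial with parameters $(s', e')$ only modestly larger than $(s, e)$, and iterate roughly $k-1$ times down to the base case.

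The main obstacle is precisely this parameter control. Khovanskii's theorem on fewnomials, the best general tool for bounding real zeros of expressions of this shape, gives a bound that is exponential in $s$; a polynomial bound would represent a significant strengthening of current real-algebraic technology. A successful Rolle-based proof would need to exploit cancellation or structure specific to the SPS form that the general fewnomial framework does not see. An alternative route aimed directly at integer zeros (via $p$-adic valuations or height-of-integer-root arguments) might conceivably bypass the fewnomial bottleneck, but no concrete such attack is apparent to me. For these reasons I regard the conjecture as genuinely open, and the sketch above as a blueprint for attack rather than a prediction of success.
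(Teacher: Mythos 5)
You have not proved the statement, and neither does the paper: this is Conjecture~\ref{tauSPS}, which the paper explicitly leaves open. The paper only proves \emph{consequences} of it (via Theorem~\ref{lbth}, applied to the generator $f_i(x)=x-i$, it would give $\per\notin\vpzero$), records the $k=1$ case via Descartes' rule, the trivial $2kt^m-1$ bound obtained by expanding into monomials, and the remark that even $k=2$ looks nontrivial. Your base case and your proposed strengthening to real roots are exactly the paper's own observations (the strengthening is precisely Conjecture~\ref{realtau}, the real $\tau$-conjecture, equally open), so to your credit the sketch is aligned with the intended line of attack, and you correctly refrain from claiming success.

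The genuine gap is the one you name, and it is worth seeing concretely why the Rolle induction does not close. Write $F=\sum_{i=1}^k P_i$ with $P_i=\prod_j f_{ij}$. Differentiating $F/P_k$ and clearing denominators produces a sum of roughly $(k-1)\cdot 2m$ products; each product involves the $f_{ij}$, the $f_{kj}$, and their derivatives, and while each derivative $f_{ij}'$ is still $t$-sparse, the number of sparse factors per product roughly doubles and the number of outer summands is multiplied by about $2m$ at every round. Iterating $k-1$ times therefore yields parameters exponential in $k$, which recovers nothing better than the trivial $kt^m$-monomial expansion; Rolle also only transfers root counts up to additive losses at multiple roots and at zeros of $P_k$ (the latter are harmless by Descartes, but the bookkeeping does not rescue the blow-up). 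Khovanskii's fewnomial machinery, the only general tool available, is likewise exponential in the number of monomials involved. So the missing ingredient is exactly a mechanism for exploiting cancellation specific to the SPS shape so that the parameters stay polynomially bounded across the induction --- this is the open problem itself, not a step one can currently fill in. Your text should be read as a research program, as you say, and not as a proof of the conjecture.
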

 This conjecture implies that $\per {\not \in} \vp^0$ (apply Theorem~\ref{lbth} to the algebraic number generator $f_i(x)=x-i$).
Conjecture~\ref{tauSPS} follows from the $\tau$-conjecture of Shub and Smale on integer roots of polynomials~\cite{ShubSmale,SmaleProblems} since, as explained after Definition~\ref{SPSdef}, polynomials in ${\rm SPS}_{s,e}$ can be evaluated
by constant-free arithmetic circuits of size polynomial in $s$ and $\log e$.
It was already shown  in~\cite{Burg09} that the $\tau$-conjecture implies 
a lower bound for the permanent. The point of Conjecture~\ref{tauSPS} is that to obtain such a lower bound we no longer have
to bound the number of integer roots of arbitrary arithmetic circuits: we need only do this for sums of products of sparse polynomials. This looks like a much more manageable class of circuits, but the question is of course still wide open.
Another related  benefit of SPS polynomials in this context is that techniques from real analysis might become applicable.
Before explaining this in more detail we formulate a somewhat stronger conjecture. The idea is that the parameter $e$ in Conjecture~\ref{tauSPS} as well as the sparsity hypothesis on the integer coefficients might be irrelevant.
This leads to:
\begin{conjecture}[$\tau$-conjecture for SPS polynomials, strong form] \label{conj2}
Consider a nonzero polynomial of the form 
$$f(X)=\sum_{i=1}^k \prod_{j=1}^m f_{ij}(X),$$
 where each $f_{ij} \in \zz[X]$ has at most $t$ monomials. The number of integer roots of $f$ is bounded by a polynomial function of $kmt$.
\end{conjecture}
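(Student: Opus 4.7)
Since any integer root is a real root, it suffices to bound the real roots of $f$; using the symmetry $X \mapsto -X$ and treating $X=0$ separately, the task reduces to bounding positive real roots. I would begin by calibrating the two extreme regimes to see what target to shoot for. When $m=1$ the polynomial $f$ is a sparse polynomial with at most $kt$ monomials, so Descartes' rule of signs yields at most $kt-1$ positive real roots. When $t=1$ each $f_{ij}$ is a monomial, hence each summand is a monomial, hence $f$ itself is $k$-sparse and Descartes gives at most $k-1$ positive roots. The genuine difficulty is the joint regime: brute-force expansion produces $k\cdot t^m$ monomials, so Descartes alone only yields an exponential bound.

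The natural strategy is an induction in the spirit of Khovanskii's fewnomials theory, using Rolle's theorem as the engine. If $f$ has $N$ positive real roots then $f'$ has at least $N-1$, and by the Leibniz rule $f'$ is again a sum of products of the same depth $m$ and sparsity $t$, but with $km$ summands. A naive iteration blows $k$ up multiplicatively and is useless. The crucial ingredient would be a \emph{divide-then-differentiate} step: before each differentiation, divide $f$ by a well-chosen rational function (typically a monomial $X^{\alpha}$, possibly multiplied by a common factor shared between several of the $f_{ij}$) so that at least one structural parameter strictly decreases. This mirrors the classical Rolle-based proof of Descartes' rule, where dividing by the smallest monomial before differentiating kills a coefficient, and iterating $kt-1$ times leaves a polynomial with no positive roots.

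The main obstacle, and the reason the conjecture is still open, is identifying a complexity measure $\Phi(k,m,t)$ which is polynomial in $kmt$, upper-bounds the positive real root count at every stage, and strictly decreases under such a reduction step. No measure I am aware of works: pure Rolle iteration grows $k$ multiplicatively; direct Khovanskii bounds applied to the auxiliary system $Y_{ij}=f_{ij}(X),\ \sum_i\prod_j Y_{ij}=0$ are exponential in the number of monomials; induction on $m$ seems to lose a factor of $t$ at each level. I would first try to establish a weaker bound, such as $(kmt)^{O(\log(kmt))}$ or $2^{O(\sqrt{km}\log t)}$, because, as observed after Theorem~\ref{lbth}, even a significantly superpolynomial bound on real roots is already strong enough to imply a depth-four lower bound on the permanent. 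A plausible specific angle is to exploit the very structured monomial support of the auxiliary system just mentioned (it involves only about $kt+k$ distinct monomials in the $(X,Y_{ij})$ variables, despite the large ambient dimension) in order to beat the generic exponential Khovanskii bound; another is to attack Conjecture~\ref{tauSPS} first, since the additional sparsity hypothesis on the integer coefficients provides a further handle that the stronger form of the conjecture discards.
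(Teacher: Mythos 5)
There is no paper proof to measure you against here: the statement is Conjecture~\ref{conj2}, and the paper does not prove it --- it is presented as an open problem, stated only because it is a clean intermediate point between Conjecture~\ref{tauSPS} (which it implies, since the size of $f$ as defined in Section~\ref{sps_section} is at most $kmt$) and the real $\tau$-conjecture, Conjecture~\ref{realtau} (which implies it, since integer roots are real roots). You recognize this correctly: your text is a survey of possible attacks, not a proof, and it does not claim to be one. As an exploration it is consistent with what the paper itself says: the easy regimes you calibrate match the paper's remarks on Conjecture~\ref{realtau} (the $k=1$ case follows from Descartes' rule, giving at most $2m(t-1)+1$ real roots, and brute-force expansion only gives the exponential bound $2kt^m-1$); your point that a merely superpolynomial bound would already be valuable matches the discussion following Conjecture~\ref{realtau}, where a bound $q(s)\leq 2^{(\log s)^{1+c}}$ with $c<1$ still yields $\per\notin\vpzero$ via Theorem~\ref{lbth}, and the Proposition of Section~\ref{real_section} shows that even $q(s)=2^{s^{o(1)}}$ suffices for a depth-4 lower bound; and your suggestion to attack Conjecture~\ref{tauSPS} first, exploiting coefficient sparsity, is a reasonable reading of why the paper keeps the weaker form alongside the strong one (Conjecture~\ref{conj2} deliberately discards both the parameter $e$ and the sparsity of the coefficients, so any proof of it must avoid using them).

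The only substantive caution: none of the concrete mechanisms you sketch is known to close the gap, and the obstacles you name are real --- plain Rolle iteration multiplies the number of summands, and Khovanskii-type bounds applied to the auxiliary system in the variables $(X,Y_{ij})$ are exponential in the number of monomials, which is exactly the difficulty the paper acknowledges in Section~\ref{remarks} when it says the general fewnomial estimates, applied in a straightforward manner, do not seem strong enough. So the correct status report is: conjecture in the paper, conjecture in your write-up, no gap attributable to either side beyond the openness of the problem itself.
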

Note that the size of $f$ as defined in Section~\ref{sps_section} is bounded by $kmt$. Therefore, Conjecture~\ref{conj2} is indeed stronger than Conjecture~\ref{tauSPS}.
Finally, we formulate an even stronger conjecture.
\begin{conjecture}[real $\tau$-conjecture] \label{realtau}
Consider a nonzero polynomial of the form  
\begin{equation} \label{spspoly}
f(X)=\sum_{i=1}^k \prod_{j=1}^m f_{ij}(X),
\end{equation}
where each $f_{ij} \in \rr[X]$ has at most $t$ monomials. The number of real roots of $f$ is bounded by a polynomial function of $kmt$.
\end{conjecture}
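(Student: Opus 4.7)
My plan is to proceed by induction on the number of product summands $k$ in the representation (\ref{spspoly}), using Rolle's theorem to pass from $k$ summands to a polynomial in which $k$ has been absorbed into the other parameters. The base case $k = 1$ is immediate from Descartes' rule of signs: the real zeros of a single product $\prod_{j=1}^m f_{1j}$ are the union of the real zero sets of the $m$ factors $f_{1j}$, each of which is $t$-sparse and so has at most $2t-1$ nonzero real roots, giving at most $m(2t-1) + 1$ real roots overall, which is polynomial in $mt$.

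For the inductive step, I would write $f = g + h$ with $g = \sum_{i=1}^{k-1} \prod_j f_{ij}$ and $h = \prod_j f_{kj}$. Every real zero of $f$ that is not a zero of $h$ is a zero of $g/h + 1$, and between any two consecutive such zeros Rolle's theorem produces a zero of $(g/h)' = W/h^2$ where $W = g'h - gh'$. Hence the real zeros of $f$ are bounded by the real zeros of the polynomial $W$, plus the real zeros of $h$ (which is $O(mt)$ by the base case), plus one.

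Expanding $W$ via the product rule shows that it is itself a sum of $2(k-1)m$ products of $2m$ factors, where each factor is either one of the $f_{ij}$ or a derivative $f_{ij}'$; since differentiation preserves $t$-sparsity, $W$ has the form (\ref{spspoly}) with new parameters $k' = O(km)$, $m' = O(m)$, $t' = t$. So the Rolle reduction replaces the triple $(k, m, t)$ with $(O(km), O(m), t)$, bounding the real zeros of $f$ in terms of those of a new SPS polynomial.

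The main obstacle is that this reduction does not decrease $k$; on the contrary, $k$ grows by a factor of $m$ at each step. Naive iteration therefore fails to terminate with a polynomial bound, and merely unrolling the recursion produces only a superpolynomial estimate. To close the argument one would need either a more subtle potential function — some weighted combination of $k$, $m$, and $t$ tuned to the cancellations in the Wronskian-like expression $g'h - gh'$ — that strictly decreases at each Rolle step, or a non-iterative global argument exploiting the specific product structure of the summands. General fewnomial techniques \`a la Khovanskii give only exponential bounds on the real zeros of such polynomials, and refining these to a polynomial bound is precisely why the statement is left as a conjecture rather than a theorem.
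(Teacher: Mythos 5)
The statement you were asked about is a \emph{conjecture}: the paper does not prove it, and explicitly says there is little evidence for or against it. The only things the paper establishes are the two observations you also make: the case $k=1$ via Descartes' rule (the paper gets at most $2t-2$ nonzero real roots per $t$-sparse factor, hence $2m(t-1)+1$ real roots for the product --- note your count of $2t-1$ nonzero roots is off by one, since $t$ monomials allow at most $t-1$ sign changes for positive roots and $t-1$ for negative ones), and the trivial bound $2kt^m-1$ obtained by expanding $f$ into at most $kt^m$ monomials. So there is no paper proof to compare your argument against, and your proposal, as you yourself state in the final paragraph, is not a proof either.

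That said, your exploration is sound as far as it goes, and it correctly identifies the real obstruction. The Rolle/Wronskian step is carried out properly: zeros of $f$ away from zeros of $h$ give zeros of $g/h+1$, consecutive ones sandwich zeros of $(g/h)'=W/h^2$ with $W=g'h-gh'$, and $W$ is again of the form (\ref{spspoly}) with parameters roughly $(2(k-1)m,\,2m,\,t)$. The fatal point is exactly the one you name: $k$ grows by a factor of order $m$ at each Rolle step, so naive iteration yields only superpolynomial (indeed roughly $t^{O(m)}$- or $m^{O(k)}$-type) bounds, in line with what general fewnomial theory already gives. Proving the conjecture would require either exploiting cancellation in the Wronskian-type expression $g'h-gh'$ beyond mere sparsity counting, or a genuinely different global argument; absent that, your write-up should be read as a correct reduction plus an honest statement of why the problem remains open, not as a proof of the statement.
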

One could also formulate a weak version of the real $\tau$-conjecture where the parameters $s$ and $e$ would play the same role as in Conjecture~\ref{tauSPS}.
Also, instead of a bound on the number of real roots which is polynomial
in $kmt$ one could seek a bound $q(kmt)$ which is slightly superpolynomial
in the sense of Theorem~\ref{lbth}: 
for some constant $c<1$ and $s$ large enough, we have
$q(s) \leq 2^{(\log s)^{1+c}}$. 
By Theorem~\ref{lbth}, such a bound would still be strong enough 
to conclude that the permanent is not in $\vpzero$ (consider again
the algebraic number generator $f_i(x)=x-i$, and the function $r(e)=1$).
This goal might still be difficult to achieve, so it would be of great interest
to establish upper bounds on the number of real roots that are even weaker
but still strong enough to imply interesting lower bounds.
For instance:
\begin{proposition}
Assume that for nonzero polynomials of the form~(\ref{spspoly}) the number of real roots is less than $q(kmt)$, where the function $q$ 
satisfies the condition $q(s)=2^{s^{o(1)}}$. 
Then the permanent is not computable by polynomial size depth 4 circuits
 using polynomial size integer constants.
\end{proposition}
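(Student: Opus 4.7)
The proof follows the structure of Theorem~\ref{lbth} with the algebraic number generator $f_i(x) = x-i$, so that $H_m = \{1,\ldots,m\}$ and the polynomial to which the hitting-set argument is applied is the Pochhammer--Wilkinson polynomial $g_n(x) = \prod_{i=1}^{2^n}(x-i)$, which has exactly $2^n$ distinct real roots. The key difference is that, instead of assuming $\per \in \vpzero$ and paying the $\sqrt n$ cost of Corollary~\ref{vpzero2SPS}, we assume for contradiction that the permanent is directly computable by depth 4 circuits of polynomial size with integer constants of polynomial bit size. This stronger starting point yields a stronger bound on the SPS size parameters of $2^{p(n)}g_n$.

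First I would re-examine Lemma~\ref{lemma_ch}: its proof only uses that the permanent has polynomial-size constant-free arithmetic circuits, so under our hypothesis we still conclude $\chpoly = \ppoly$. By Theorem~\ref{bigproduct} the coefficients of $g_n$ are therefore in $\ppoly$, and the bit-indexed multilinear polynomial $h_n(\bar x, \bar z)$ introduced in the proof of Lemma~\ref{chtosparse} belongs to $\vnpzero$ via Valiant's criterion (Lemma~\ref{criterion}).

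The main technical step is a depth-4 analogue of Theorem~\ref{complete}: Valiant's completeness reduction expresses $2^{p(n)}h_n$ as a projection $\per_N(M)$, where $N=\poly(n)$ and each entry of $M$ is either a variable of $h_n$ or a constant in $\{0,\pm 1\}$ (the $1/2$ appearing in the reduction is cleared by $2^{p(n)}$). Plugging $M$ into the hypothesized depth 4 circuit for $\per_N$ amounts to relabelling its input gates, so $2^{p(n)}h_n$ inherits a polynomial-size depth 4 circuit with polynomial-size integer constants. Applying the substitution~(\ref{univ_subs}) directly to this depth 4 representation (bypassing Corollary~\ref{vpzero2SPS}), each multilinear monomial at depth~1 becomes a single univariate term $\pm C \cdot 2^{\beta} x^{\alpha}$ with $\alpha,\beta < 2^{cn}$ and $C$ of polynomial bit size; grouping by the gates at depths 2, 3 and 4 yields a representation $2^{p(n)}g_n = \sum_{i=1}^{k}\prod_{j=1}^{m} f'_{ij}$ in which $k$, $m$, and the sparsity $t$ of each $f'_{ij}$ are all bounded by $\poly(n)$, the coefficients are expressible as differences of nonnegative integers with $\poly(n)$ nonzero binary digits, and the degree is at most $e = 2^{O(n)}$.

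The contradiction is then immediate: $2^{p(n)}g_n$ has at least $2^n$ real roots, whereas by our hypothesis its number of real roots is at most $q(kmt) = 2^{(\poly(n))^{o(1)}} = 2^{n^{o(1)}}$, which is strictly less than $2^n$ for all sufficiently large $n$. The most delicate point is the depth-4 preservation claim in the previous paragraph; I expect it to follow from a direct inspection of Valiant's standard completeness proof, since the matrix entries produced there are either single indeterminates or small integer constants, and such substitutions into the input layer of a $\sum\prod\sum\prod$ circuit neither increase the depth nor enlarge the constants beyond polynomial bit size.
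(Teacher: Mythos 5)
Your proposal is correct and follows essentially the same route as the paper's own proof: the generator $f_i(x)=x-i$, the collapse of the counting hierarchy, Valiant's criterion for the bit-indexed multilinear polynomial $h_n$, the depth-4 strengthening of Theorem~\ref{complete} via completeness of the permanent, and the substitution of powers of $x$ and $2$ into the resulting polynomial-size depth~4 circuit to get a polynomial-size SPS representation of $2^{p(n)}g_n$, contradicting the $2^{n^{o(1)}}$ bound on real roots against the $2^n$ integer roots. Your write-up merely spells out details (the projection argument and the grouping by gate depth) that the paper leaves implicit.
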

\begin{proof}
Assume that the permanent is computable by polynomial size depth 4 circuits using polynomial size integer constants. 
In particular, the permanent is in $\vpzero$.
By completeness of the permanent we have the following strengthening of
Theorem~\ref{complete}: for every family  $(h_n)$ in $\vnp^0$ there exists a polynomially bounded function $p(n)$ such that the family $(2^{p(n)}h_n)$ is 
computable by polynomial size depth 4 circuits (using polynomial size
integer constants).

We consider again the algebraic number generator $f_i(x)=x-i$ and the poynomial
$g_n(x)=\prod_{i=1}^{2^n} f_i(x)$ of Theorem~\ref{rep_theorem}.
We claim that $g_n$ can be expressed as a SPS polynomial of size
polynomial in $n$. This yields a contradiction since the assumption
in the statement of the Proposition implies that $g_n$ has
$2^{n^{o(1)}}$ real roots but in reality $g_n$ has $2^n$ 
integer roots.

The proof of the claim is similar to the proof of Lemma~\ref{chtosparse} and Theorem~\ref{rep_theorem}. 
In particular, we have for $g_n$ the same representation as in 
equation~(\ref{plug}) of Lemma~\ref{chtosparse}. 
But now, the above-mentioned strengthening of Theorem~\ref{complete} shows
that $h_n$ is computable by a depth 4 circuit of polynomial size.
We obtain the SPS polynomial for $g_n$ by plugging powers of $x$ 
into this circuit.
\end{proof}

At present there isn't a lot of evidence for or against Conjecture~\ref{realtau}. 
We do know that the conjecture holds true when $k=1$: by Descartes' rule each polynomial $f_{1j}$ has at most $2t-2$ nonzero real roots, so $f$ has at most $2m(t-1)+1$ real roots. 
Also some indirect evidence is provided by the few known examples of polynomials with short arithmetic circuits but many real roots~\cite{BC76,SmaleProblems}:
these examples are definitely not given as sums of products of sparse polynomials.
The case $k=2$ already looks nontrivial. 
In the general case we can expand $f$ as a sum of at most $kt^m$ monomials,
so we have at most $2kt^m-1$ real roots.
A refutation of the conjecture would be interesting from the point of view of real algebra and geometry as it would yield examples of ``sparse like'' polynomials with many real roots. Of course, a proof of the conjecture would be even more interesting as it would yield a lower bound for the permanent.


\section{Final Remarks} \label{remarks}

We have shown that constructing hitting sets for sums of products of sparse
polynomials with sparse coefficients will show that $\per {\not \in} \vp^0$.
It should be possible to obtain a variation of this result where the
conclusion is that $\tau(\per_n)$, the constant-free arithmetic 
circuit complexity of the permanent, is not polynomial in $n$.
To obtain this stronger conclusion, a stronger hypothesis should be necessary.
It seems natural to expect that the role payed by sparse polynomials 
with sparse coefficients will now played by sparse polynomials 
with coefficients of ``small''  $\tau$-complexity (this is a larger class of polynomials since sparse coefficients 
are certainly of small $\tau$-complexity).

Most importantly, one should try to prove or disprove 
the real $\tau$-conjecture. A solution in the case $k=2$ (a sum of two products
of sparse polynomials) would already be quite interesting.
We note that the search for good upper bounds on the number of solutions
of sparse multivariate systems is a topic of current interest in real algebraic
geometry.
The theory of fewnomials~\cite{Khov91} provides finiteness results and sometimes quantitative estimates on the number of real roots in very general ``sparse like'' situations. The general estimates from~\cite{Khov91}, at least when applied in a straightforward manner, do not seem 
strong enough to imply the real $\tau$-conjecture.
Nevertheless, one can hope that the methods developed in~\cite{Khov91} as well as in more recent work such as~\cite{BerBiSot,BiSot,LiRojasWang} will turn out to be useful.

\bibliographystyle{plain}

\end{document}